\setlist[enumerate,1]{label=(\roman*)}
\title{Ambiguous Cheap Talk}
\author{Longjian Li\thanks{Peking University. Email: \href{mailto:ljliecon@gmail.com}{ljliecon@gmail.com}.} } 
\numberwithin{equation}{section}
\numberwithin{thm}{section}
\begin{document}

\theoremstyle{plain}
\newtheorem{definition}{Definition}
\newtheorem{theorem}{Theorem}
\newtheorem{lemma}{Lemma}
\newtheorem{corollary}{Corollary}
\newtheorem{assumption}{Assumption}
\newtheorem{example}{Example}
\newtheorem{remark}{Remark}
\newtheorem{condition}{Condition}
\newtheorem{proposition}{Proposition}
\newtheorem{conjecture}{Conjecture}
\maketitle

\begin{abstract}
This paper explores how ambiguity affects communication. We consider a cheap talk model in which the receiver evaluates the sender's message with respect to its worst-case expected payoff generated by multiplier preferences. We characterize the receiver's optimal strategy and show that the receiver's posterior action is consistent with his ex-ante action. We find that in some situations, ambiguity improves communication by shifting the receiver's optimal action upwards, and these situations are not rare.
\medskip

\textbf{Keywords:} communication, ambiguity, cheap talk
\end{abstract}

\section{Introduction}
\par
Decision makers usually need to communicate with better-informed senders to acquire relevant information. In many situations, decision makers face ambiguity interpreted by \cite{strzalecki2011axiomatic}, due to limited information, they can not formulate a single prior,  but they can conjecture a reference probability to be the most possible one and considers many other priors. It is natural for us to ask \textit{will} and \textit{how} ambiguity affects communication and changes the sender's welfare?
\par
To address this question formally, we introduce a cheap talk model with ambiguity: an informed sender sends a message to the ambiguity aversion receiver. The receiver evaluates this message with respect to its worst-case expected payoff generated by multiplier preferences and then takes an action to maximize his worst-case payoff.
\par
Suggested by \cite{hansen2001robust}, multiplier preference is represented as:
\begin{equation}
V_R=\mathop{inf}\limits_{\mathcal{F}} E[v(a,\theta)|m]+\beta R(F||G).    
\end{equation}

Where $R(F||G)$ is called Kullback-Leibler divergence which captures the distance between two probability distributions. $G$ is the receiver's reference probability. Multiplier preference proposes soft constraints on possible probability distributions. It only requires possible probabilities are not too far from the reference probability. It has no restrictions on the possible probabilities' detailed properties (such as the distributions' mean and variance). Compared with hard constraints, such soft constraints enable us to consider more general distribution families and broader strategic issues. 
\par
We first explore a special example: the ambiguous parameter $\beta=0$, and the receiver evaluates the sender's message with maxmin preference. We can interpret this example as the receiver knows nothing about the state of the world and can not even propose a reference probability. We show that consistent with the traditional model, there exist partitional interval equilibria. The receiver can guarantee herself in the worst case by choosing the middle point. The intuition is simple: the worst probability distribution should be the most extreme one, thus a Bernoulli distribution, and the receiver will be indifferent between the two endpoints if she chooses the middle point. The sender is better and more information can be conveyed with the full-ambiguous receiver compared with a bayesian receiver whose prior's conditional mean is smaller than the middle point. 
\par
Next, we consider the receiver's posterior optimal action. %and ex ante optimal action when the ambiguous parameter $\beta>0$. 
We find the worst case posterior probability distribution is generated by dividing the reference probability by a normal distribution, thus the worst case distribution is more dispersed than the reference probability around the receiver's action. Surprisingly, even when multiplier preferences have a KL-divergence term in their utility function, the receiver's optimal action is still the conditional mean of the worst-case probability. We then prove the existence of equilibrium by Brouwer fixed point theorem.
 
\par
As we can not explicitly derive the receiver's action's analytic expression, we want to explore her action's properties. We show receiver's optimal action is asymptotic. To our surprise, her optimal action does not necessarily lie between the middle point (full ambiguity case) and her reference probability's conditional mean (Bayesian case), namely, there exist some special reference probabilities and ambiguous parameters such that the receiver's action is larger than his bayesian optimal action and full-ambiguous optimal action, thus more information can be communicated in this case. We construct a counterexample to show the above statement, using this counterexample, we can show the receiver's optimal action does not have monotone property either.
\par
We make welfare analysis to answer the question mentioned in the beginning: Will ambiguity improve communication? We show that if ambiguity shifts upwards receiver's optimal action and decreases conflict between the sender and the receiver, thus the sender is better off in the ambiguity case than in the bayesian case and more information can be transmitted. We also prove that there exists a bijection from the set of reference probabilities
under which the sender is worse when the receiver faces ambiguity to the set of reference probabilities under which the sender is better
when the receiver faces ambiguity, namely, the two sets have the same cardinality.

\par
We consider the ex-ante ambiguity case as an extension. When the receiver faces ex-ante ambiguity, her optimal ex-ante strategy is to choose her optimal posterior strategy in each interval separately. He considers the worst case probability distribution in the following manner: he first considers the worst-case conditional distributions in each interval which are the same as the posterior ambiguity case, and then he considers the worst interval among them. The receiver is dynamically consistent, her conditional actions when facing ex ante ambiguity are the same as her posterior actions.
\par
The remainder of the paper is organized as follows. In section 2, we review the related literature. In section 3, we introduce the basic model and give the full ambiguity example. In section 4, we solve the posterior ambiguity model and then discuss the receiver's optimal action's properties. In section 5, we solve the ex ante ambiguity model. %In section 5, we analyze a patient-doctor communication application. 
Section 6 concludes.

\section{Related Literature}
\par
This paper joins a recently growing literature studying information transmission with ambiguity. This includes the work of \cite{kellner2017modes}, who simplified communication via a two state and two action model. In a follow up paper, \cite{kellner2018endogenous} introduced endogenous ambiguity in cheap talk model by allowing the ambiguity aversion receiver to perform Ellsbergian randomization. In a similar manner, \cite{beauchene2019ambiguous} studied how the sender can be better off using ambiguous communication devices in a persuasion game with ambiguity-averse players. In our model, ambiguity is exogenous, it is due to the receiver can not formulate a single prior model. While \cite{bose2014mechanism} showed ambiguous communication can implement non-incentive-compatible social choice functions. All these papers show ambiguity can sometimes improve communication and welfare.
\par
According to \cite{epstein2003recursive}, Dynamical inconsistency is always an issue in ambiguity updating, prior by prior updating is available only when the set of prior is rectangular. \cite{beauchene2019ambiguous} did not require rectangularity, and therefore allows for violating dynamical consistency. Our model contributes to this literature by allowing dynamical consistency in communication. It offers a new modeling approach that can be used to make more complex ambiguous communication problems tractable. The decision theory based work about ambiguity updating includes \cite{pacheco2002rule} who axiomatized the full Bayesian updating which is used in our paper,  \cite{gilboa1993updating} who axiomatized maximum likelihood updating as well as \cite{epstein2007learning} who studied ambiguity updating in a more detailed model.
\par
This paper also adds to the broader literature of robustness to distributions (sorted by \cite{carroll2019robustness}). This includes the work of \cite{bergemann2011robust}, \cite{carrasco2018optimal}, \cite{auster2018robust}, \cite{du2018robust}, \cite{brooks2021optimal}, \cite{he2022correlation}.

\section{The Basic Model}
\par
We consider a cheap talk model with ambiguity. There are two players: a sender (he) and a receiver (she). The game begins with the realization of a random state $\theta \in [0,1]$ which is privately observed by the sender. After observing $\theta$, the sender sends a message $m: \Theta \rightarrow M=[0,1]$ to the receiver. The receiver then observes $m$ and takes an action $a \in A$. $a$ can be any real number.
\par
Given state $\theta$ and action $a$, the sender's payoff is $$u(\theta,a)=-(a-\theta-d)^2$$ and the receiver's payoff is $$v(a,\theta)=-(a-\theta)^2.$$ $d$ shows the difference in their preferences, called bias parameter.
\par
Different from the standard model, we assume the receiver is ambiguous about $\theta$'s distribution and evaluates messages with respect to their worst case expected payoff generated by multiplier preferences. We can interpret her ambiguity similar to \cite{strzalecki2011axiomatic}, she is not able to formulate a single-prior model due to limited information. But she can conjecture a reference probability $G$ defined on $[0,1]$ that she thinks to be the most possible distribution, and takes many other priors $F$ into consideration. She thinks the priors that are closer to her reference probabilities are more likely to be the true distribution. Notice that in our model, the receiver's ambiguity is only about the prior distribution, not the sender's strategy.
\par
Considering ambiguity in a dynamic setting is rather challenging, for instance, \cite{epstein2003recursive} pointed out dynamical consistency of maxmin preference can be guaranteed only when the prior set is rectangular. Although \cite{gumen2013dynamically} showed multiplier preferences satisfy dynamical stability\footnote{Namely, if we update the ex ante multiplier preferences in a dynamically consistent manner, then the posterior preferences also belong to multiplier preferences.}, they did not point out specific updating rules. To resolve updating and consistency issues, we consider the receiver's posterior and ex-ante strategy separately and then examine if they are dynamically consistent. We state the receiver's ex ante ambiguity model in the extension part.
\par
Formally, We first consider the posterior ambiguity case, where the receiver behaves as if ambiguity arises after she receives the sender's message $m$. A strategy profile $(m^*(\theta), a^*(m))$ and a belief system constitute an equilibrium if the following conditions hold. First, for $\forall \theta \in [0,1]$, sender's optimal strategy $m^*$ solves
 $$\mathop{max}\limits_{m} u(\theta,a(m))$$
Second, for each $m$, $a^*$ solves
$$\mathop{sup}\limits_{a \in A}\mathop{inf}\limits_{\mathcal{F}} E[v(a,\theta)|m]+\beta R(F_m||G_m)$$
Where $G$ is called the reference probability, it can be interpreted as the receiver's best guess about the true distribution\footnote{The relative entropy $R(F||G)$ (also called Kullback-Leibler divergence), is a measure of how one probability is different from a second. It is a mapping from $\Delta[0,1]$ into $[0,\infty)$ defined by
\begin{equation}
R(F||G)=\left\{
\begin{array}{rcl}
\int_{0}^{1}(log \frac{dF}{dG})dF & & { F \in \Delta^{\sigma}(g)}\\
\infty & & {otherwise}\\
\end{array} \right.
\end{equation} Note that in the receiver's problem, $F_m$ has the same support with $G_m$.} and $G_m$ is the updated reference probability. If $m$ is an equilibrium message and $\Theta^m=\{m(\theta)=m\}$, we have $$g_m(\theta)=\frac{g(\theta)}{\int_{\Theta^m} g(\hat{\theta}) d\hat{\theta}}$$
$$f_m(\theta)=\frac{f(\theta)}{\int_{\Theta^m} f(\hat{\theta}) d\hat{\theta}}$$

Parameter $\beta \in [0,\infty)$ measures the degree of ambiguity she faces towards the distribution, we can also interpret it as the receiver's trust level towards reference probability. Higher values of $\beta$ correspond to more information, less ambiguity about the distribution, and more trust in the reference probability. when $\beta=\infty$, there is no ambiguity about distribution and the receiver believes the reference probability is the true probability distribution of the state of the world. When $\beta=0$, the receiver has no information about the distribution at all, we call this case \textit{Full Ambiguity}.

\par
For simplicity, we characterize the sender's optimal strategy here and then focus on the receiver's problem in the following analysis.
\begin{proposition}
The sender's optimal strategy is partitional. There is a profile of threshold $\{\theta_i\}$ such that $0=\theta_0<\theta_1<...<\theta_{N-1}<\theta_N=1$. If $\theta \in [\theta_{i-1},\theta_i]$, the sender sends message $m_i$ for $i=1, 2, 3,..., N$.
\end{proposition}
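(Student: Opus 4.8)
The plan is to follow the classical Crawford–Sobel argument, adapted to the fact that the receiver's response to a message is a single action determined by her (worst-case) optimization problem. First I would observe that an equilibrium induces a partition of the state space: since the sender's payoff $u(\theta,a)=-(a-\theta-d)^2$ depends on the message only through the induced action $a^*(m)$, all messages that induce the same action are payoff-equivalent, so without loss we may index messages by the actions they induce and let $A^* \subseteq \mathbb{R}$ be the set of actions that arise in equilibrium. For each $\theta$, the sender picks a message inducing an action in $\arg\max_{a \in A^*} -(a-\theta-d)^2$, i.e. the action in $A^*$ closest to $\theta+d$. The set of states indifferent between two given actions $a<a'$ in $A^*$ is a single point (the midpoint of $a+a'$ shifted by $-d$), and for states below it the sender strictly prefers $a$, above it strictly prefers $a'$. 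This immediately yields that the preimage of each induced action is an interval, so $A^*$ is ordered and the induced partition has the claimed form $0=\theta_0<\theta_1<\cdots<\theta_N=1$, with the sender sending a common message $m_i$ on $[\theta_{i-1},\theta_i]$.

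The one genuine gap in that sketch is finiteness of the partition, i.e. that $N<\infty$. For this I would invoke the receiver's optimality condition. The key monotonicity fact I expect to use (and which is established or at least implicit in the later sections of the paper, and can be cited) is that the receiver's optimal action $a^*(m)$ is strictly increasing in the interval $[\theta_{i-1},\theta_i]$ on which the pooling occurs — intuitively, a worst-case expected-payoff maximizer against a reference belief supported on a higher interval chooses a higher action. Given strict monotonicity of the receiver's action in the left endpoint of an interval, the ``arbitrage'' equation linking consecutive thresholds, namely that $\theta_i$ is the state indifferent between action $a^*(m_i)$ (optimal against $G$ restricted to $[\theta_{i-1},\theta_i]$) and $a^*(m_{i+1})$ (optimal against $G$ restricted to $[\theta_i,\theta_{i+1}]$), becomes a second-order-difference relation forcing the interval lengths to grow at a rate bounded below (the standard ``$\theta_{i+1}-\theta_i > \theta_i-\theta_{i-1} + $ something proportional to $d$'' estimate), so only finitely many intervals fit inside $[0,1]$.

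Concretely the steps in order are: (1) reduce messages to induced actions and define $A^*$; (2) show for any two distinct induced actions the set of states preferring one over the other is a half-line, hence each induced action's preimage is an interval and $A^*$ is finite-or-countable and ordered; (3) write the indifference (no-arbitrage) condition at each interior threshold $\theta_i$; (4) use the receiver's worst-case optimization to argue $a^*(m_i)$ lies strictly inside $(\theta_{i-1},\theta_i)$ and is strictly increasing across intervals, turning the indifference conditions into a recursion with a uniform lower bound on growth of interval lengths; (5) conclude $N<\infty$ and relabel to get $0=\theta_0<\cdots<\theta_N=1$.

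I expect step (4) — getting a clean, uniform lower bound on $\theta_{i+1}-\theta_i-(\theta_i-\theta_{i-1})$ — to be the main obstacle, because unlike the Bayesian case we do not have a closed form for $a^*(m)$; we only know it is the conditional mean of the (endogenous, reference-dependent) worst-case distribution on the interval. The cleanest route is probably to bound $a^*(m_i)$ between $\theta_{i-1}$ and $\theta_i$ (which follows from the worst-case distribution being supported on that interval, a fact the paper establishes) and then note that the indifference condition $\theta_i + d = \tfrac{1}{2}\bigl(a^*(m_i)+a^*(m_{i+1})\bigr)$ together with $a^*(m_i)<\theta_i<a^*(m_{i+1})$ and $d>0$ already forces $\theta_{i+1}>\theta_i$ by a strictly positive margin; if $d=0$ one instead needs strict concavity/variance considerations to rule out infinitely many vanishing intervals accumulating at an interior point, which I would handle by the same midpoint identity showing each interval must have positive length and that these lengths cannot sum to more than $1$ unless finite — though I would be prepared to simply assume $d>0$ if the paper does, since the interesting comparative statics are there.
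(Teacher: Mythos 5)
Your proposal is correct and follows essentially the same route as the paper: the appendix proves exactly your step (2) (each induced action's preimage is an interval, via the quadratic indifference/midpoint argument) and obtains finiteness by the same separation idea you settle on in your ``cleanest route'' paragraph, namely that an indifferent threshold type together with the receiver's action lying inside the pooling interval forces adjacent induced actions (equivalently, interval lengths) to be separated by an amount proportional to $d>0$, so only finitely many intervals fit in $[0,1]$. The only caveat is that the strict monotonicity of $a^*(m)$ across intervals that you initially hoped to cite is not established anywhere in the paper, but as you yourself note, the distilled argument does not need it.
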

\begin{proof}
See appendix.
\end{proof}

\subsection{Full Ambiguity Case}
\par
We first consider the special case when $\beta=0$ in which the receiver has no information about the true distribution of the state of the world and considers all probabilities to be possible. In this case, the receiver's problem reduces to:
$$V_R=\mathop{sup}\limits_{a \in A} \mathop{inf}\limits_{\mathcal{F}} E[v(a,\theta)|m].$$
\par
In the model above, faced with great uncertainty and limited information, can the receiver guarantee herself in the worst distribution case? The answer is yes: she can always choose the middle point to get such a guarantee.
\begin{theorem}
When $\beta=0$, $\theta \in [\theta_{i-1},\theta_i]$ , the receiver's optimal action is to choose the middle point, namely, $a^*=\frac{\theta_{i-1}+\theta_i}{2}$.
\end{theorem}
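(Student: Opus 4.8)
The plan is to compute the inner infimum in closed form for an arbitrary action $a$, and then maximize the resulting worst-case payoff over $a \in \mathbb{R}$. By Proposition 1 the equilibrium message $m_i$ is sent exactly when $\theta \in [\theta_{i-1},\theta_i]$, and since the receiver's ambiguity concerns only the prior (not the sender's strategy), after hearing $m_i$ she knows only that $\theta \in [\theta_{i-1},\theta_i]$. Hence the set $\mathcal{F}$ over which the infimum is taken is the set $\Delta[\theta_{i-1},\theta_i]$ of all probability distributions supported on that interval, and --- because $\beta = 0$ --- the reference probability $G$ and the relative-entropy term drop out entirely.

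First I would fix $a \in \mathbb{R}$ and evaluate
\[
W(a) \;:=\; \inf_{F \in \Delta[\theta_{i-1},\theta_i]} E_F\bigl[v(a,\theta)\bigr] \;=\; -\sup_{F \in \Delta[\theta_{i-1},\theta_i]} E_F\bigl[(a-\theta)^2\bigr].
\]
The key observation is that $\theta \mapsto (a-\theta)^2$ is convex, so for every $\theta$ in the interval $(a-\theta)^2 \le \max\{(a-\theta_{i-1})^2,(a-\theta_i)^2\}$; integrating against any $F$ gives $E_F[(a-\theta)^2] \le \max\{(a-\theta_{i-1})^2,(a-\theta_i)^2\}$, with equality for the Dirac mass at whichever endpoint is farther from $a$ (a degenerate two-point, i.e.\ Bernoulli, distribution). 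Therefore $W(a) = -\max\{(a-\theta_{i-1})^2,(a-\theta_i)^2\}$.

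Next I would maximize $W$, equivalently minimize $\phi(a) := \max\{(a-\theta_{i-1})^2,(a-\theta_i)^2\}$ over $\mathbb{R}$. On $(-\infty,\theta_{i-1}]$ both arguments of the maximum strictly decrease as $a$ increases, and on $[\theta_i,\infty)$ both strictly increase, so any minimizer lies in $[\theta_{i-1},\theta_i]$; there $(a-\theta_{i-1})^2$ is increasing and $(a-\theta_i)^2$ is decreasing, the two coincide only at $a = \frac{\theta_{i-1}+\theta_i}{2}$, and $\phi$ strictly exceeds $\bigl(\frac{\theta_i-\theta_{i-1}}{2}\bigr)^2$ at every other point of the interval (recall $\theta_{i-1}<\theta_i$). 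Hence $\phi$ has a unique minimizer at the midpoint, so $W$ has a unique maximizer $a^* = \frac{\theta_{i-1}+\theta_i}{2}$, with worst-case value $-\bigl(\frac{\theta_i-\theta_{i-1}}{2}\bigr)^2$.

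The only step needing care is the evaluation of the infimum over the infinite-dimensional set $\Delta[\theta_{i-1},\theta_i]$ --- one must rule out that some continuous or otherwise non-extreme $F$ does strictly worse (for the receiver) than the Bernoulli distributions. This is exactly what the pointwise convexity bound above delivers (equivalently: mass placed in the interior can be transported to the two endpoints without decreasing $E_F[(a-\theta)^2]$), so no minimax theorem or weak-$*$ compactness argument is required and the outer $\sup_a$ and inner $\inf_F$ are never interchanged. Everything else is elementary one-variable optimization.
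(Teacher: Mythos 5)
Your proof is correct, and it is complete. The core insight is the one the paper's main-text sketch gestures at --- for any fixed $a$ the adversarial distribution piles all mass on the endpoint farther from $a$, so the receiver equalizes the two endpoint losses by choosing the midpoint --- but your execution differs from the paper's appendix in a way worth noting. The appendix works with the CDF: it proves several lemmas (via Lebesgue integrability and integration by parts) showing the worst-case $F$ is constant on the interior of the interval, concludes the worst case is a Bernoulli distribution on the endpoints, and then verifies that $a^*=\frac{\theta_{i-1}+\theta_i}{2}$ together with $p^*=\frac12$ is a saddle point of the induced zero-sum game against Nature, in the spirit of Carrasco et al.\ (2018). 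You instead evaluate the inner infimum in closed form for \emph{every} $a$, via the elementary pointwise bound $(a-\theta)^2\le\max\{(a-\theta_{i-1})^2,(a-\theta_i)^2\}$ from convexity, obtaining $W(a)=-\max\{(a-\theta_{i-1})^2,(a-\theta_i)^2\}$, and then minimize this max of two quadratics. This buys you a shorter, fully self-contained argument that never interchanges $\sup$ and $\inf$ and needs no measure-theoretic lemmas, and it additionally yields uniqueness of the optimal action; what it does not deliver as explicitly is Nature's equilibrium strategy $p(\theta_{i-1})=p(\theta_i)=\frac12$ at the optimum, which the paper's saddle-point formulation exhibits directly (though your argument identifies the worst-case distributions for each off-midpoint $a$). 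One small caveat on both sides: whether Dirac masses are admissible depends on the set $\mathcal{F}$ (the paper's footnote restricts $F$ to share the support of $G$ when $\beta>0$); since the paper itself uses Bernoulli worst cases at $\beta=0$, your choice of $\mathcal{F}=\Delta[\theta_{i-1},\theta_i]$ is consistent with its treatment, and in any case the infimum value and the optimal $a$ are unchanged if one only allows densities.
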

\begin{proof}
Here we only show the idea of the proof. If the action $a'>a^*=\frac{\theta_i+\theta_{i-1}}{2}$, then the worst case distribution is $p(\theta_{i-1})=1$, namely, the farest endpoint is chosen with probability one. If the action $a''<a^*$, similarly, the worst case distribution is $p(\theta_i)=1$, so the optimal strategy for the receiver is to choose the middle point $a=a^*$ and the worst case probability is $p(\theta_i)=p(\theta_{i-1})=\frac{1}{2}$.  
\end{proof}
\par
It is worth noting that this middle point result not only depends on ambiguity assumption, but also depends on the quadratic utility function form, especially the function's symmetry property. 
\par
The intuition of this result is that when the receiver receives the sender's message, suppose she takes an action other than the middle point, the worst case must be the state of the world is either $\theta_i$ or $\theta_{i-1}$ so that the distance between the action and the state of the world is farthest. To avoid this, the receiver takes the middle point to guarantee his worst case payoff and the nature mixes between $\theta_i$ and $\theta_{i-1}$ with equal probability.
\par
When the conditional mean of true distribution on that interval is smaller than the middle point, the sender will be better and more information can be conveyed compared with a bayesian receiver who knows this true distribution.

\section{Posterior Ambiguity}

\par
Now we consider when the ambiguous parameter $\beta>0$ and the reference probability distribution is $G$. 
In order to use Calculus of Variations, we make following assumption:

\begin{assumption}
$F_m(\theta)$ and $G_m(\theta)$ are continuous and differentiable: $F_m \in C^{2}[0,1]$, $G_m \in C^{2}[0,1]$.
\end{assumption}

\begin{theorem}
When the conditional reference probability is $g_m(\theta)$, the worst case probability distribution is: $$f_m(\theta)=exp(C+\frac{(\theta-a)^2}{\beta})g_m(\theta).$$ Where C is a constant number such that $$\int_{\theta_{i-1}}^{\theta_i} exp(C+\frac{(\theta-a)^2}{\beta})g_m(\theta) d\theta=1 $$
\end{theorem}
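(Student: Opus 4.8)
The statement is that, holding the receiver's action $a$ fixed, the worst-case conditional density $f_m$ that solves the inner minimization $\inf_{\mathcal F} E[v(a,\theta)|m] + \beta R(F_m\|G_m)$ takes the form $f_m(\theta) = \exp\!\bigl(C + (\theta-a)^2/\beta\bigr)\, g_m(\theta)$. Since $v(a,\theta) = -(a-\theta)^2$, the objective is $-\int (\theta-a)^2 f_m(\theta)\,d\theta + \beta \int f_m(\theta)\log\frac{f_m(\theta)}{g_m(\theta)}\,d\theta$, to be minimized over densities $f_m$ supported on $[\theta_{i-1},\theta_i]$ with $\int f_m = 1$. This is a constrained calculus-of-variations problem, and the plan is to attack it directly via a Lagrangian.

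**Main steps.** First I would write the Lagrangian $L(f_m,\lambda) = \int_{\theta_{i-1}}^{\theta_i}\Bigl[-(\theta-a)^2 f_m(\theta) + \beta f_m(\theta)\log\frac{f_m(\theta)}{g_m(\theta)}\Bigr]d\theta + \lambda\Bigl(\int_{\theta_{i-1}}^{\theta_i} f_m(\theta)\,d\theta - 1\Bigr)$, treating the integrand as a function $\mathcal{L}(\theta, f_m)$ with no dependence on $f_m'$. Because there is no derivative term, the Euler--Lagrange equation degenerates to the pointwise stationarity condition $\partial \mathcal{L}/\partial f_m = 0$, i.e.\ $-(\theta-a)^2 + \beta\bigl(\log\frac{f_m(\theta)}{g_m(\theta)} + 1\bigr) + \lambda = 0$. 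Solving for $f_m$ gives $\log\frac{f_m(\theta)}{g_m(\theta)} = \frac{(\theta-a)^2}{\beta} - 1 - \frac{\lambda}{\beta}$, hence $f_m(\theta) = \exp\!\bigl(C + \frac{(\theta-a)^2}{\beta}\bigr) g_m(\theta)$ with $C := -1 - \lambda/\beta$. Then I would impose the normalization $\int_{\theta_{i-1}}^{\theta_i} f_m = 1$ to pin down $C$, which is exactly the displayed equation for $C$; Assumption~1 and the fact that $g_m$ is a genuine conditional density on a compact interval guarantee the integral is finite and positive, so such a $C$ exists and is unique. Finally, to confirm this critical point is the minimizer and not merely a stationary point, I would check convexity: the map $f_m \mapsto \int \beta f_m\log\frac{f_m}{g_m}$ is strictly convex in $f_m$ (relative entropy is jointly convex, and strictly convex in its first argument), while $f_m\mapsto -\int(\theta-a)^2 f_m$ is linear, so the objective is strictly convex on the convex set of densities; hence the stationary point is the unique global minimum.

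**Anticipated obstacle.** The routine part is the variational calculation; the part requiring care is the domain over which the infimum is taken and whether the formal Lagrange argument is rigorous. The relative entropy $R(F_m\|G_m)$ is $+\infty$ unless $F_m$ is absolutely continuous with respect to $G_m$, and the problem restricts attention to $F_m$ sharing the support of $G_m$ (per the footnote), so one must argue that the infimum is attained in the interior of the relevant space — i.e.\ that the minimizing $f_m$ is a strictly positive density, which follows from the exponential form since $g_m>0$ on the interval and the exponential factor is bounded away from zero on the compact $[\theta_{i-1},\theta_i]$. I would also note that one should verify the candidate $f_m$ indeed has finite relative entropy (immediate, since $\log(f_m/g_m) = C + (\theta-a)^2/\beta$ is bounded on the interval, so $\int f_m \log(f_m/g_m)$ is finite). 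A secondary subtlety: the overall problem is a $\sup_a\inf_{\mathcal F}$, but the theorem only concerns the inner $\inf$ for fixed $a$, so I would state explicitly that $a$ is held fixed throughout and that the dependence of $C$ on $a$ is understood. With strict convexity in hand, the Lagrangian stationarity condition is both necessary and sufficient, which closes the argument cleanly.
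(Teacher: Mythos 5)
Your proposal is correct, and it reaches the theorem by a mechanically different route from the paper's. The paper writes the objective as a functional of the c.d.f.\ $F_m$ and its derivative $f_m$ (the expected-loss term appearing as $(2\theta-2a)F_m(\theta)$ after integration by parts), applies the Euler--Lagrange equation $L_{F_m}=\frac{d}{d\theta}L_{f_m}$, and obtains the first-order ODE $2(\theta-a)=\beta\frac{f_m'(\theta)}{f_m(\theta)}-\beta\frac{g_m'(\theta)}{g_m(\theta)}$, whose constant of integration is the $C$ that is then pinned down by normalization; this is why Assumption~1 (smoothness of $g_m$) is invoked, and the normalization is carried by the boundary values of $F_m$ rather than a multiplier. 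You instead keep the density itself as the variable, attach a Lagrange multiplier to $\int f_m=1$, and use pointwise stationarity (no derivative term), identifying $C=-1-\lambda/\beta$ --- the standard Gibbs/Donsker--Varadhan exponential-tilting argument. Your route buys two things the paper's proof does not supply: it never differentiates $g_m$, so the $C^2$ assumption is not needed for this step, and your strict-convexity observation (linear loss term plus strictly convex relative entropy) upgrades the first-order condition to a proof that the tilted density is the unique global minimizer, whereas the paper stops at the stationarity condition; your remarks on attainment, strict positivity, and finiteness of the entropy of the candidate likewise fill gaps the paper leaves implicit. Both derivations land on the same formula $f_m(\theta)=\exp\bigl(C+\frac{(\theta-a)^2}{\beta}\bigr)g_m(\theta)$.
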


\begin{proof}
See appendix.
\end{proof}

\par
Replace $f_m(\theta)$ with the worst case probability formula we show in theorem 2, use some algebra, the receiver's decision problem becomes:
$$\mathop{sup}\limits_{a \in A}  C$$
In the worst case, the receiver tries to maximize the parameter $C$. The receiver's action $a$ can not directly affect his utility, instead, $a$ affects the receiver's decision problem by affecting the worst case probability distribution, and then changing his utility indirectly.
Remember $C$ is a constant number to ensure the generated $f_m(\theta)$ is indeed a probability distribution: $$\int_{\theta_{i-1}}^{\theta_i} exp(C+\frac{(\theta-a)^2}{\beta})g_m(\theta) d\theta=1$$
So Receiver's problem is equivalent to his dual problem
$$\mathop{inf} \limits_{a \in A} \int_{\theta_{i-1}}^{\theta_i} g_m(\theta)exp(\frac{(\theta-a)^2}{\beta}) d\theta$$
The first order condition is 
$$\int_{\theta_{i-1}}^{\theta_i} (\theta-a)exp(\frac{(\theta-a)^2}{\beta})g_m(\theta) d\theta=0 $$
\par
Notice that even though the receiver's optimization problem contains relative entropy, his optimal strategy is still to choose the conditional mean of the generated worst case probability distribution.
\par
Because the above system of equations is non-linear, we can not get an analytical solution directly. However, we can prove the equilibrium always exists by Brouwer fixed point theorem. 
\begin{theorem}
Given $\beta$, there exists a solution $a^*$ of the above system of equations. Thus the equilibrium exists.
\end{theorem}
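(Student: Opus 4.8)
The plan is a fixed-point argument in the spirit of Crawford and Sobel, carried out on the space of $N$-cell partitions, with Brouwer's theorem supplying the fixed point. Fix the number of cells $N\ge 1$ and set
$$\Delta_N=\bigl\{(\theta_1,\dots,\theta_{N-1}):0=\theta_0\le\theta_1\le\dots\le\theta_{N-1}\le\theta_N=1\bigr\},$$
a nonempty compact convex subset of $\mathbb{R}^{N-1}$. The first step is to pin down the receiver's action on each cell. On $[\theta_{i-1},\theta_i]$ the dual problem from the text is $\mathop{inf}\limits_{a}\Phi_i(a)$ with $\Phi_i(a)=\int_{\theta_{i-1}}^{\theta_i} g_m(\theta)\exp\!\bigl((\theta-a)^2/\beta\bigr)\,d\theta$, and a one-line computation gives $\partial_a^2\exp\!\bigl((\theta-a)^2/\beta\bigr)=\tfrac{2}{\beta}\bigl(1+\tfrac{2(\theta-a)^2}{\beta}\bigr)\exp\!\bigl((\theta-a)^2/\beta\bigr)>0$, so $\Phi_i$ is strictly convex in $a$; it is also coercive, since the integrand blows up uniformly in $\theta$ as $|a|\to\infty$. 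Hence $\Phi_i$ has a unique minimizer $a_i^{*}(\theta_{i-1},\theta_i)$, characterized precisely by the first-order condition displayed just before the theorem. Rewriting that condition as $\int_{\theta_{i-1}}^{\theta_i}(\theta-a_i^{*})\,w(\theta)\,d\theta=0$ with $w=g_m\exp\!\bigl((\theta-a_i^{*})^2/\beta\bigr)\ge 0$ exhibits $a_i^{*}$ as a $w$-weighted average of $\theta$ over the cell, so $a_i^{*}\in[\theta_{i-1},\theta_i]$; on a degenerate cell put $a_i^{*}=\theta_{i-1}$, and the squeeze $a_i^{*}\in[\theta_{i-1},\theta_i]$ gives continuity of $a_i^{*}(\cdot,\cdot)$ up to the boundary, while in the interior continuity follows from $\Phi_i''>0$ via the implicit function theorem (or Berge's maximum theorem).

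Second, define $T:\Delta_N\to\mathbb{R}^{N-1}$ by
$$T_i(\theta_1,\dots,\theta_{N-1})=\tfrac12\bigl(a_i^{*}(\theta_{i-1},\theta_i)+a_{i+1}^{*}(\theta_i,\theta_{i+1})\bigr)-d,$$
the state at which a sender of type $T_i$ is indifferent between the two adjacent cell actions $a_i^{*}$ and $a_{i+1}^{*}$; an equilibrium partition is exactly a fixed point of $T$ (together with the induced actions), and conversely. Because $a_i^{*}\le\theta_i\le\theta_{i+1}\le a_{i+2}^{*}$, we have $T_{i+1}-T_i=\tfrac12\bigl(a_{i+2}^{*}-a_i^{*}\bigr)\ge 0$, so $T$ always returns a nondecreasing vector. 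Composing $T$ with coordinatewise truncation onto $[0,1]$, which preserves monotonicity, yields a continuous self-map $\widehat T:\Delta_N\to\Delta_N$, and Brouwer's theorem produces $\vec\theta^{*}$ with $\vec\theta^{*}=\widehat T(\vec\theta^{*})$.

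Finally, unwind the fixed point into an equilibrium. For each coordinate with $0<\theta_i^{*}<1$ the truncation is inactive, so $\theta_i^{*}=T_i(\vec\theta^{*})$ is exactly the sender's indifference condition, while receiver optimality on each cell holds by the construction of $a_i^{*}$; coordinates pinned at $0$ or $1$ correspond to leading or trailing cells of measure zero, which are deleted and relabeled, leaving a genuine partition with $N'\le N$ cells on which all indifference and optimality equations hold. Since $N=1$ always gives the pooling equilibrium, an equilibrium exists for every $\beta>0$, which is the claim. I expect the main obstacle to be this last step together with the comparative statics behind it: since $a_i^{*}$ is defined only implicitly through a nonlinear equation, its continuity --- and, if one insists on genuine $N$-cell equilibria rather than admitting a coarsening, its monotonicity in the cell endpoints --- must be obtained from $\Phi_i''>0$ and the signs of $\partial_{\theta_{i-1}}\Phi_i'$ and $\partial_{\theta_i}\Phi_i'$, whereas the remainder is routine Brouwer packaging.
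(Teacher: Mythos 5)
Your proposal is correct, and it takes a genuinely different --- and more complete --- route than the paper. The paper proves only the first half of the statement: it fixes the interval $[\theta_{i-1},\theta_i]$, defines the map $L(a)=\int\theta e^{(\theta-a)^2/\beta}g_m\,d\theta\big/\int e^{(\theta-a)^2/\beta}g_m\,d\theta$, observes that $L$ is a continuous self-map of the compact convex set $[\theta_{i-1},\theta_i]$, and invokes Brouwer to get a fixed point, i.e.\ a root of the first-order condition; the existence of a consistent partition (``thus the equilibrium exists'') is asserted rather than proved. You instead dispense with any fixed-point theorem at the level of the single cell --- strict convexity ($\Phi_i''>0$) plus coercivity of the dual objective gives existence \emph{and uniqueness} of $a_i^{*}$, and shows that the FOC root is the global minimizer, a second-order check the paper skips and implicitly needs --- and then spend Brouwer where the paper does not, on the Crawford--Sobel partition map $T$. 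What your approach buys is exactly the missing second half of the theorem (an actual equilibrium partition), plus uniqueness of $a_i^{*}$, which is what makes the map $T$ well defined in the first place and which matters later for the paper's counterexample and monotonicity discussion. The one soft spot is your final unwinding step: you treat only coordinates truncated at $0$ or $1$, but a fixed point of $\widehat T$ could in principle have two \emph{interior} coordinates coinciding, $\theta_{i-1}^{*}=\theta_i^{*}=t\in(0,1)$. This is harmless --- one checks from $\theta_{i-1}^{*}=T_{i-1}$ and $a_{i-1}^{*}\le t=a_i^{*}$ that such a configuration forces $d\le 0$, so it cannot occur for $d>0$, and even if it did you could coarsen as you do at the boundary --- but it should be said; it does not threaten the existence claim, since $N=1$ already yields the pooling equilibrium. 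Your closing worry about monotonicity of $a_i^{*}$ in the endpoints is unnecessary for your own argument: the Brouwer-on-the-simplex construction needs only $a_i^{*}\in[\theta_{i-1},\theta_i]$ and continuity, not the forward-solution monotonicity of the classical Crawford--Sobel induction.
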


\begin{proof}
See appendix.
\end{proof}

\par
As we can not explicitly get the receiver's action's analytic expression, it is still instructive to study the solution's properties. We want to explore three questions:
\par 
\begin{itemize}
\item 
Does the solution have asymptotic properties? Namely, when $\beta \rightarrow 0$, will the solution converges to the middle point (\textit{Full Ambiguity} case) and when $\beta \rightarrow \infty$, the solution converges to $g_m(\theta)$'s mean (Bayesian case)?
\item
Does the solution lie between the middle point (\textit{Full Ambiguity} case) and $g_m(\theta)$'s mean (Bayesian case)? 
\item
Does the solution have a monotone property: if $g_m(\theta)$'s mean is larger than the middle point, when $\beta$ increases, will the solution increases?
\end{itemize}
\par
In section 4.1, we will answer the first question. In section 4.2, we will answer the rest questions.

\subsection{Asymptotic Property}
\par
Fortunately, the answer to the first question is yes. When $\beta \rightarrow 0$, the solution converges to the middle point and when $\beta \rightarrow \infty$, the solution converges to $g_m(\theta)$'s mean. We have the following proposition:
\begin{proposition}
When $\beta \rightarrow \infty$, $a \rightarrow \int_{\theta_{i-1}}^{\theta_i}\theta g_m(\theta)d\theta$. When $\beta \rightarrow 0$, $a=\frac{\theta_i+\theta_{i-1}}{2}$.
\end{proposition}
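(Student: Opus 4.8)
The plan is to analyze the first-order condition derived just above, namely
$$\int_{\theta_{i-1}}^{\theta_i} (\theta-a)\exp\!\left(\tfrac{(\theta-a)^2}{\beta}\right) g_m(\theta)\, d\theta = 0,$$
and extract the limiting behavior of the root $a=a(\beta)$ as $\beta\to\infty$ and $\beta\to 0$. For the case $\beta\to\infty$, the key observation is that $\exp((\theta-a)^2/\beta)\to 1$ uniformly in $\theta\in[\theta_{i-1},\theta_i]$ (since $(\theta-a)^2$ is bounded on the compact interval, using that $a$ stays in a bounded set — it lies between the endpoints by the single-crossing structure of the integrand in $a$). Hence the first-order condition converges to $\int_{\theta_{i-1}}^{\theta_i}(\theta-a)g_m(\theta)\,d\theta=0$, whose unique solution is $a=\int_{\theta_{i-1}}^{\theta_i}\theta g_m(\theta)\,d\theta$, the conditional mean. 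I would make this rigorous by noting the equivalent minimization problem $\inf_a \int g_m(\theta)\exp((\theta-a)^2/\beta)\,d\theta$ is strictly convex in $a$ for each $\beta$ (second derivative positive), so the minimizer is unique and depends continuously on $\beta$; then a compactness/uniform-convergence argument (Berge's maximum theorem, or direct estimation) pins down the limit.

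For the case $\beta\to 0$, the factor $\exp((\theta-a)^2/\beta)$ blows up, and the integral is dominated by the values of $\theta$ that maximize $(\theta-a)^2$ over $[\theta_{i-1},\theta_i]$ — that is, by mass concentrating at whichever endpoint is farther from $a$. This is a Laplace-type asymptotic. If $a>\tfrac{\theta_i+\theta_{i-1}}{2}$, then $\theta_{i-1}$ is strictly farther, the negative term $(\theta-a)$ near $\theta_{i-1}$ dominates, and the first-order integral is eventually negative — contradicting the first-order condition; symmetrically $a<\tfrac{\theta_i+\theta_{i-1}}{2}$ forces the integral eventually positive. Hence for small $\beta$ the root must sit at (or converge to) the midpoint, where the two endpoint contributions cancel by symmetry of $(\theta-a)^2$. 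I would formalize the dominance by splitting the integral into neighborhoods of the two endpoints and bounding the Laplace integrals (using that $g_m$ is continuous and, by Assumption~1, bounded away from pathologies), showing the endpoint-$\theta_{i-1}$ contribution is exponentially larger than everything else once $a$ is bounded away from the midpoint.

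The main obstacle is the $\beta\to 0$ direction: unlike the clean $\beta\to\infty$ limit, here one cannot pass to the limit inside the integral, and one must control a ratio of exploding Laplace integrals uniformly enough to conclude the root converges to the midpoint (rather than merely that the midpoint is the limit point of \emph{some} subsequence). The cleanest route is probably a contradiction argument: suppose $a(\beta_n)\to a^\dagger\neq\tfrac{\theta_i+\theta_{i-1}}{2}$ along some sequence $\beta_n\to0$, rescale the integrand near the dominant endpoint, and show the normalized first-order condition has a strictly signed limit, impossible. A subtlety worth flagging is that the statement for $\beta\to 0$ is written as an exact equality $a=\tfrac{\theta_i+\theta_{i-1}}{2}$ rather than a limit; I would interpret and prove it as the limit $\lim_{\beta\to0}a(\beta)=\tfrac{\theta_i+\theta_{i-1}}{2}$, consistent with Theorem~1, and remark that this matches the Full Ambiguity solution continuously.
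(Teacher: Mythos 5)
Your proposal is correct and follows essentially the same route as the paper: the paper's own two-line proof simply notes that $f_m \to g_m$ as $\beta \to \infty$ and that the worst-case distribution degenerates to the endpoint Bernoulli as $\beta \to 0$, which is exactly the uniform-convergence-of-the-first-order-condition and Laplace-concentration logic you spell out. Your version is in fact more careful than the paper's (strict convexity and uniqueness of the minimizer, the contradiction argument pinning the root to the midpoint, and reading the $\beta \to 0$ claim as a limit), the only loose end being that the Laplace step implicitly needs $g_m$ to be strictly positive near the endpoints, which Assumption 1 alone does not guarantee.
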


\begin{proof}
See appendix.
\end{proof}
\par
The intuition of this result can be shown like this: When $\beta$ is small and very close to 0, it is very close to the full ambiguity case in which the receiver's optimal strategy is to choose the middle point. When $\beta$ is very large and goes to $\infty$, it is very close to the C-S model case in which the solution is the reference probability's conditional mean.

\subsection{A Counterexample}
\par
It is natural to ask does the solution lie between the middle point (the full ambiguity solution) and $g_m(\theta)$'s mean (the Bayesian solution)? Does the solution have a monotone property: if $g_m(\theta)$'s mean is larger than the middle point, when trust level $\beta$ increases \footnote{ambiguous level decreases}, will the solution increases from the observation of the previous analysis?
\par
However, we have a negative answer to these two questions: we can find a counterexample. For simplicity, denote $g_m^n(\theta)$'s mean as $h^n$. Without loss assume $h^n \geq \frac{\theta_{i-1}+\theta_i}{2}$.
\par
Consider the partitional equilibrium only contains one interval [0,1], $g_m^n(\theta) \in C^2[0,1]$ and $g_m^n(\theta)$  almost uniform
converges to 
\begin{equation}
l(\theta)=
\begin{cases}
0 & \text{$\theta \in [0,\frac{1}{4}]$}\\
3 & \text{$\theta \in [\frac{1}{4},\frac{1}{2}]$ }\\
0 & \text{$\theta \in [\frac{1}{2},\frac{3}{4}]$}\\
1 & \text{$\theta \in [\frac{3}{4},1]$}\\
\end{cases}
\end{equation}
\par
The detailed argument about the existence of $g_m^n(\theta)$ and the proof of the almost uniform convergence see appendix. 
\par
Note that $h^n \rightarrow \frac{1}{2}$. So if the solution lie between the middle point and $g_m^n(\theta)$'s mean, that is $a \in [\frac{\theta_{i-1}+\theta_i}{2},h]$, thus $a \to \frac{1}{2}$. 
\par
Notice $a=\frac{1}{2}$ is $f_m(\theta)$'s mean in interval [0,1]. so we must have 
$$\frac{1}{2} \int_{0}^{1}  f_m(\theta) d\theta=\int_{0}^{1} \theta f_m(\theta) d\theta$$
By computation we have $LHS \neq RHS$ which contradicts with $a$ is $f_m(\theta)$'s conditional mean. So $a \neq \frac{1}{2}$, $a \notin [\frac{\theta_{i-1}+\theta_i}{2},h]$.
\par
From the above argument, we know the solution does not lie between the middle point and $g_m(\theta)$'s conditional mean.
\par
If the solution has a monotone property: $g_m(\theta)$'s conditional mean is not smaller than the middle point, when $\beta$ increases, the solution increases, then $g_m(\theta)$'s conditional mean is her solution's least upper bound and the middle point is her solution's greatest lower bound.
\par
This contradicts our counterexample so the solution does not have a monotone property.
\par
This result is very surprising. It is possible to exist some reference probabilities and ambiguous parameters such that the receiver's action is larger than his bayesian optimal action and full-ambiguous optimal action. Thus more information can be communicated when the receiver is partially ambiguous, partial information is better than full information (Bayesian) and no information (full ambiguity).
\par
Next, we point out some sufficient conditions thus the receiver's optimal strategy is to choose the middle point. When $g_m(\theta)$ is symmetry and its axial of symmetry is the middle point, the receiver's problem solution is still the middle point.
\begin{proposition}
If $g_m(\theta)$ is symmetry and  its axial of symmetry is $\frac{\theta_{i-1}+\theta_i}{2}$, then $a=\frac{\theta_{i-1}+\theta_i}{2}$
\end{proposition}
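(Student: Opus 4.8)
The plan is to work directly with the receiver's first-order condition obtained in the discussion following Theorem~2. Recall that the receiver's optimal action solves the dual problem $\inf_{a\in A}\phi(a)$, where
$\phi(a):=\int_{\theta_{i-1}}^{\theta_i} g_m(\theta)\exp\!\big(\tfrac{(\theta-a)^2}{\beta}\big)\,d\theta$,
with associated stationarity condition $\int_{\theta_{i-1}}^{\theta_i}(\theta-a)\exp\!\big(\tfrac{(\theta-a)^2}{\beta}\big)g_m(\theta)\,d\theta=0$. Hence it suffices to show that $c:=\tfrac{\theta_{i-1}+\theta_i}{2}$ is the unique minimizer of $\phi$.

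First I would establish that $\phi$ has a unique critical point. Differentiating twice under the integral sign (legitimate because $g_m$ is continuous on the compact interval $[\theta_{i-1},\theta_i]$ and the exponential factor and its $a$-derivatives are bounded there) gives
$\phi''(a)=\tfrac{2}{\beta}\int_{\theta_{i-1}}^{\theta_i}\big(1+\tfrac{2(\theta-a)^2}{\beta}\big)\exp\!\big(\tfrac{(\theta-a)^2}{\beta}\big)g_m(\theta)\,d\theta>0$,
so $\phi$ is strictly convex on $\mathbb{R}$; moreover $\phi(a)\to\infty$ as $|a|\to\infty$. Therefore $\phi$ admits a unique global minimizer, and that minimizer is its only stationary point.

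It then remains to verify that $a=c$ is stationary. Substituting $a=c$ into the first-order integral and changing variables $t=\theta-c$ (so $t$ ranges over $[-L,L]$ with $L=\tfrac{\theta_i-\theta_{i-1}}{2}$), the integrand becomes $t\,\exp(t^2/\beta)\,g_m(c+t)$. The hypothesis that $g_m$ is symmetric with axis $c$ means $g_m(c+t)=g_m(c-t)$, so this integrand is an odd function of $t$; its integral over the symmetric interval $[-L,L]$ vanishes. Thus $c$ satisfies the first-order condition, and by the previous paragraph $a=c$ is the receiver's optimal action. Equivalently, the same substitution shows $\phi(c+s)=\phi(c-s)$, i.e.\ $\phi$ is symmetric about $c$, and a strictly convex function symmetric about a point is minimized there.

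The argument is essentially routine; the only points requiring mild care are the justification of differentiating under the integral sign and the coercivity plus strict-convexity bookkeeping needed to upgrade ``$c$ is a stationary point'' to ``$c$ is the unique global minimizer'' — without strict convexity one would only know that $c$ is one candidate among possibly several critical points, which would not settle the proposition.
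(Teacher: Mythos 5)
Your proof is correct, and it is more complete than the one in the paper. Both arguments turn on the same key observation — symmetry of $g_m$ about the midpoint $c=\frac{\theta_{i-1}+\theta_i}{2}$ forces the first-order condition to hold at $a=c$ — but the paper executes this through the fixed-point characterization (setting $a=c$ makes the induced worst-case density $f_m(\theta)=\exp(C+\frac{(\theta-a)^2}{\beta})g_m(\theta)$ symmetric about $c$, hence $c=E_{f_m}[\theta]$) and stops there. That only certifies $c$ as \emph{a} critical point; it does not rule out other critical points, which is not a vacuous worry given that Section 4.2 of the paper shows the solution can behave quite unexpectedly. Your additional step — computing $\phi''(a)=\frac{2}{\beta}\int(1+\frac{2(\theta-a)^2}{\beta})e^{(\theta-a)^2/\beta}g_m\,d\theta>0$ to get strict convexity and coercivity of the dual objective, so that the symmetric stationary point is the unique global minimizer over all of $A=\mathbb{R}$ — closes that gap, and the computation is right (each integrand is strictly convex in $a$ and $g_m$ integrates to one, so the integral inherits strict convexity). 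The only quibble is presentational: you give two redundant closings (FOC-plus-uniqueness, and symmetry of $\phi$ about $c$ plus strict convexity); either one suffices, and the second is the cleaner of the two.
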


\begin{proof}
See Appendix.
\end{proof}

%\begin{remark}
%The uniform distribution is a special case of proposition 3. If $g_m(\theta)$ is a normal distribution and its conditional mean $h=\frac{\theta_{i-1}+\theta_i}{2}$, it is also a special case of proposition 3.
%\end{remark}
\begin{remark}
This middle point result not only depends on ambiguity assumption, but also depends on the quadratic utility function form, especially the function's symmetry property. 
\end{remark}

\subsection{Welfare Analysis}
\par
We now turn to the question we asked in the begining: \textit{will} and \textit{how} ambiguity affects communication and changes the sender's welfare? We denote $u^{B}$ as the sender's welfare in the Bayesian model and denote $u^{A}$ as the sender's welfare in the ambiguity model.  $\mathcal{G}_w=\{g: u^B>u^A\}$ is the set of reference probabilities under which the sender is worse when the receiver faces ambiguity, while $\mathcal{G}_b=\{g: u^B<u^A\}$ is the set of reference probabilities under which the sender is better when the receiver faces ambiguity. If ambiguity shifts upwards the best response of the receiver, namely, $a>\int_{\theta_{i-1}}^{\theta_i} \theta g_m(\theta) d \theta$, we have following properties:
\begin{proposition}
 (i) $u^B<u^A$, the sender is better off in the ambiguity case than in the Bayesian case. (ii) Only babbling equilibrium exists when $d>\hat{d}>d^*$ where $d^*$ is the threshold in the traditional model and $\hat{d}$ is the threshold in the ambiguity model. (iii) There exists a bijection from $\mathcal{G}_b$ to $\mathcal{G}_w$, thus $\card (\mathcal{G}_b)=\card(\mathcal{G}_w)$.
\end{proposition}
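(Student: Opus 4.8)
The plan is to treat the three claims separately, relying on two structural facts available from the preceding results: the receiver's action is always the conditional mean of the operative distribution (of $g_m$ in the Bayesian benchmark $\beta=\infty$, of the worst-case density $f_m=\exp(C+(\theta-a)^2/\beta)g_m$ of Theorem 2 under ambiguity), and, payoffs being quadratic, every equilibrium partition is pinned down by the boundary-type indifference conditions $\theta_i+d=\frac{1}{2}(a_i+a_{i+1})$. For (i), fix an equilibrium interval $[\theta_{i-1},\theta_i]$ and let $\bar\theta=\int_{\theta_{i-1}}^{\theta_i}\theta g_m(\theta)\,d\theta$ be the Bayesian action and $a$ the ambiguity action, with $a>\bar\theta$ by hypothesis. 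The sender's conditional expected loss $L(x)=\int_{\theta_{i-1}}^{\theta_i}(x-\theta-d)^2 g_m(\theta)\,d\theta$ is strictly convex with $L'(\bar\theta)=-2d<0$ and minimizer $\bar\theta+d$, so it strictly decreases on $(\bar\theta,\bar\theta+d]$: moving the action up from $\bar\theta$ toward $a$ lowers the loss on that interval. I would then note that an upward shift of actions is compatible with a (weakly) more informative partition --- which is the content of (ii) --- and invoke the Crawford--Sobel ex-ante welfare ranking (more/finer intervals benefit the sender) to aggregate the interval-wise gains into $u^A>u^B$. The caveat to carry is that the interval-wise step needs $a<\bar\theta+2d$; this is automatic in the babbling range of (ii), where $a=a([0,1])<2d\le\bar\theta+2d$, and I would maintain it alongside $a>\bar\theta$ in the communicative range.

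For (ii), the threshold is located as follows. First reduce existence of an informative equilibrium to existence of a two-step one (the Crawford--Sobel nesting in the number of intervals); its interior threshold $t$ must solve $\Phi(t):=\frac{1}{2}\big(a([0,t])+a([t,1])\big)-t=d$, where $a(\cdot)$ is the ambiguity action on the indicated interval. As $t\downarrow0$, $a([0,t])\to0$ and $a([t,1])\to a([0,1])$, so $\Phi(0^+)=\frac{1}{2}a([0,1])$; at $t\uparrow1$, $\Phi(1^-)=\frac{1}{2}(a([0,1])-1)<0$; hence, under the regularity that makes $\Phi$ decreasing, the equation has a root in $(0,1)$ exactly when $d<\hat d:=\frac{1}{2}a([0,1])$. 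The same computation for a Bayesian receiver returns the classical threshold $d^*=\frac{1}{2}\int_0^1\theta g(\theta)\,d\theta$. The maintained hypothesis is precisely that the ambiguity action on $[0,1]$ exceeds the mean of $g$, so $\hat d>d^*$; and for $d>\hat d$ no two-step, hence no informative, equilibrium exists, leaving babbling as the unique outcome.

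For (iii), I would exhibit an explicit involution of the space of reference probabilities. Put $\iota(g)=\hat g$ with $\hat g(\theta)=g(1-\theta)$; this preserves Assumption 1 and $\iota\circ\iota=\mathrm{id}$. The substitution $\theta\mapsto1-\theta$ turns the receiver's dual program $\min_a\int g_m(\theta)e^{(\theta-a)^2/\beta}\,d\theta$ on an interval into the same program for $\hat g$ on the reflected interval with $a$ replaced by $1-a$; the identical substitution does the same for the Bayesian conditional mean. Hence $\iota$ flips the sign of ``ambiguity action minus Bayesian action'': an upward shift for $g$ is a downward shift for $\hat g$ and conversely, while a no-shift prior maps to a no-shift prior. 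Combining this with (i) and its mirror --- a downward shift moves the action to some $a'<\bar\theta<\bar\theta+d$, where $L$ is strictly decreasing, so $L(a')>L(\bar\theta)$ and the sender is strictly worse off, with no side condition needed --- shows $\iota$ carries $\mathcal{G}_b$ into $\mathcal{G}_w$ and $\mathcal{G}_w$ into $\mathcal{G}_b$; being an involution it then restricts to a bijection $\mathcal{G}_b\to\mathcal{G}_w$, so $\card(\mathcal{G}_b)=\card(\mathcal{G}_w)$.

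The recurring obstacle is the absence of a closed form for the ambiguity action. In (ii) it enters the justification of the reduction to two-step equilibria --- the monotonicity-in-$N$ argument must be re-run with actions defined only implicitly by the first-order condition of Theorem 3 --- and the verification that $\Phi$ is monotone. In (iii) it forces one to keep the clean interval-level reflection argument separate from the full equilibrium: since the indifference conditions involve $d$ asymmetrically, $\iota$ does not send the $g$-equilibrium to the $\hat g$-equilibrium of the same game, so the cardinality claim is safest read through the interval-level ``shift-up versus shift-down'' classification that already underlies (i). I expect the step needing the most care is exactly the one shared by (i) and (ii): passing from the interval-by-interval comparison of actions to an ex-ante welfare comparison via the partition-fineness (number-of-intervals) ranking.
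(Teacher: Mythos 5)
Your route coincides with the paper's. Part (ii) is proved in the appendix by exactly your computation: take the two-interval equilibrium, let $\theta_1\to 0$, and read off $2\hat d=a_1+a_2>h_1+h_2=2d^*$, i.e.\ $\hat d=\frac{1}{2}a([0,1])$ versus $d^*=\frac{1}{2}\int_0^1\theta g(\theta)\,d\theta$, with the inequality coming from the maintained upward-shift hypothesis. Part (iii) in the paper is the same reflection $\hat g(\theta)=g(\theta_{i-1}+\theta_i-\theta)$, though the paper only states it one direction; your observation that the map is an involution that flips the sign of the shift is what actually upgrades the construction to a bijection, so your packaging is tighter than the paper's. Part (i) is not argued in the appendix at all --- the paper simply cites Theorem 2 of Chen (2015) for the claim that an upward shift improves communication --- whereas you give a self-contained convexity argument; in doing so you correctly surface a condition the paper never mentions, namely that the interval-wise gain requires $a<\bar\theta+2d$ (for $a$ beyond the sender's bliss point the shift hurts him), and the same reservation you raise about reading $\mathcal{G}_b$ and $\mathcal{G}_w$ through the interval-level shift classification rather than through full-equilibrium welfare applies equally to the paper's own argument. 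In short: same approach, with your version more explicit about the two places ((i)'s upper bound on the shift, and the aggregation from intervals to ex-ante welfare) where the published proof is silent.
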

\begin{proof}
See appendix.
\end{proof}

\par
When ambiguity shifts upwards the receiver's optimal action, this effect will improve communication and encourage information transmission according to Theorem 2 in \cite{chen2015information}. The sender benefits from this effect because he has less conflict of interest with the ambiguous receiver as the receiver's action is closer to his ideal action ($\theta+d$).\footnote{The prior of the sender has no effect on the equilibrium strategies of the game and we allow heterogeneous priors. In order to present the sender's welfare, we assume the sender's prior MLR-dominates the receiver's prior without loss of generality. Sharing the same prior with the receiver is a special of MLR-dominates.} $\hat{d}>d^*$ indicates that more information can be transmitted when the receiver is ambiguity aversion.
\par
The third part of this proposition guarantees that ambiguity can improve the sender's welfare and communication is not rare. It comes from a simple observation: for any reference probability $g(\theta)$ whose induced solution is smaller than its conditional mean, we just construct a probability $\hat{g}(\theta)=g(\theta_{i-1}+\theta_i-\theta)$ that is symmetry with the original probability and the axis of symmetry is the middle point. The constructed reference probability $\hat{g}(\theta)$'s solution is larger than its conditional mean.
\par
We now use an example to show how the receiver's ambiguous attitude $\beta$, her reference probability's mean and variance affect her optimal action, then affect information communication.

\begin{example}
\par
Consider the receiver's reference probability is a normal distribution whose conditional mean is $h$, variance is $\sigma^2$ and ambiguity attitude is $\beta$, denote the middle point of the interval as $m$. If $h<m$, $\beta>2\sigma^2$, then $a>h$, thus ambiguity improves communication. If $h>m$, $\beta>2\sigma^2$, then $a<h$, ambiguity makes sender worse off. If $h<m$, $\beta<2\sigma^2$, then $a>h$. If $h>m$, $\beta<2\sigma^2$, then $a<h$.
\end{example}

\section{Extension: Ex Ante Ambiguity}
\par
Next we consider the receiver's ex-ante strategy, she behaves as if ambiguity arises before she receives the sender's message. A strategy profile $(m^*(\theta), \hat{a}^*(m))$ constitute an equilibrium if the following conditions hold. First, for $\forall \theta \in [0,1]$, sender's optimal strategy $m^*$ solves
 $$\mathop{max}\limits_{m} u(\theta,\hat{a}(m))$$
Second, for each $m$, $\hat{a}^*$ solves
$$\mathop{sup}\limits_{\hat{a} \in A}\mathop{inf}\limits_{\mathcal{F}} E[v(\hat{a},\theta)]+\beta R(\hat{F}||G)$$
\par
Under partitional equilibrium, the receiver's original problem
$$\mathop{sup}\limits_{\hat{a} \in A}\mathop{inf}\limits_{\mathcal{F}} E[v(\hat{a},\theta)]+\beta R(\hat{F}||Q)$$ reduces to
$$\mathop{sup}\limits_{\hat{a_1},..., \hat{a_n}} \mathop{inf}\limits_{(\hat{p_1}, ...,\hat{p_n}), (\hat{F_1},..., \hat{F_n})} E[v(\hat{a},\theta)]+\beta R(\sum_{i=1}^{n} \hat{p_i} \hat{F_i}|| \sum_{i=1}^{n} \hat{p_i} G_i)$$
Where $\hat{a_i}$ is receiver's action in the i-th interval, $\hat{p_i}$ is the probability that she receives the i-th signal, $\hat{F_i}$ is the conditional probability on the i-th interval. We denote the i-th interval as $M_i$. Notice the probability distribution is $\hat{F}=\sum_{i=1}^{n} \hat{p_i} \hat{F_i}$.
\begin{lemma}
$R(\sum_{i=1}^{n} \hat{p}_i \hat{F}_i|| \sum_{i=1}^{n} \hat{p}_i G_i)= \sum_{i=1}^{n} \hat{p}_i R(\hat{F}_i|| G_i)$
\end{lemma}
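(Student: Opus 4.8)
The plan is to exploit the fact that, by Proposition 1, the conditional distributions $\hat F_i$ and the conditional reference distributions $G_i$ are all supported on the intervals $M_i$, which form a partition of $[0,1]$. Because these supports are pairwise disjoint, the density of the mixture $\hat F := \sum_i \hat p_i \hat F_i$ with respect to the mixture $Q^\ast := \sum_i \hat p_i G_i$ localizes interval by interval: for $\theta \in M_i$ only the $i$-th terms of the two sums are nonzero, and the common weight $\hat p_i$ cancels in the ratio, so that $\tfrac{d\hat F}{dQ^\ast}(\theta) = \tfrac{d\hat F_i}{dG_i}(\theta)$ for $\theta \in M_i$. This is exactly the place where it matters that the reference in the reduced problem carries the \emph{same} weights $\hat p_i$ as the candidate distribution; otherwise an "entropy of the marginal over intervals" cross term would survive, and the identity would acquire an extra $\sum_i \hat p_i \log(\hat p_i / q_i)$ piece.

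Given this localization, the argument is a direct computation. First I would dispose of the degenerate cases: if $\hat p_i = 0$ for some $i$, that interval contributes nothing to either mixture and nothing to either side, so we may assume all $\hat p_i > 0$. Next, if $\hat F_i$ is not absolutely continuous with respect to $G_i$ for some $i$, then $\hat F$ fails to be absolutely continuous with respect to $Q^\ast$ on $M_i$, so $R(\hat F \| Q^\ast) = \infty$ and the right-hand side is also $\infty$; so we may assume $\hat F_i \in \Delta^\sigma(G_i)$ for every $i$, whence $\hat F \in \Delta^\sigma(Q^\ast)$. Then, using the definition of relative entropy and splitting the integral over the partition $\{M_i\}$, together with the identity $\hat F|_{M_i} = \hat p_i \hat F_i$,
\[
R(\hat F \| Q^\ast) = \int_0^1 \log\frac{d\hat F}{dQ^\ast}\, d\hat F = \sum_{i=1}^n \int_{M_i} \log\frac{d\hat F_i}{dG_i}\, d(\hat p_i \hat F_i) = \sum_{i=1}^n \hat p_i \int_{M_i} \log\frac{d\hat F_i}{dG_i}\, d\hat F_i = \sum_{i=1}^n \hat p_i R(\hat F_i \| G_i),
\]
where the last equality uses that $\hat F_i$ and $G_i$ are concentrated on $M_i$, so the integral over $M_i$ is the full integral defining $R(\hat F_i \| G_i)$.

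The only real care needed — and the closest thing to an obstacle — is the measure-theoretic bookkeeping around the density ratio: under Assumption 1 the densities $g_i$ need not be bounded away from zero on $M_i$, so one should read $\tfrac{d\hat F_i}{dG_i}$ and the integrand on the support of $\hat F_i$ with the usual convention $0\log 0 = 0$, and verify that the restriction identities $\hat F|_{M_i} = \hat p_i \hat F_i$ and $Q^\ast|_{M_i} = \hat p_i G_i$ are legitimate (immediate from the disjoint supports). Everything else is the standard chain rule for Kullback–Leibler divergence; no fixed-point or variational machinery is required here.
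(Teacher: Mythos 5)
Your argument is essentially identical to the paper's: both split the KL integral over the partition $\{M_i\}$, use the disjoint supports to reduce the mixture densities on $M_i$ to their $i$-th terms, cancel the common weight $\hat p_i$ inside the logarithm, and pull $\hat p_i$ out of the integral. Your version adds some worthwhile care about the degenerate cases ($\hat p_i = 0$ and failure of absolute continuity), but the core computation is the same.
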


\begin{proof}
See appendix.
\end{proof}
\par
Because the receiver's problem can be considered seperately in each interval, using the posterior ambiguity's result, the receiver's problem is equivalent to 
$$\mathop{sup}\limits_{\Pi C_i \in \times [\hat{C}_i,C^*_i]} \mathop{inf}\limits_{p_1, ...,p_n} \sum_{i=1}^{n} \hat{p}_i C_i$$
where $\mathop{sup}\limits_{a_i} C_i=C^*_i$, $\mathop{inf}\limits_{a_i} C_i=\hat{C}_i$ , 

\begin{theorem}
The receiver's ex-ante optimal strategy $\hat{a_i}^*$ equals his optimal posterior strategy $a_i^*$ for $i= 1,2,3,...$.The worst case distribution $\hat{p}_i=1$ for $i={argmin}_{i} C^*_i$ and $\hat{p}_i=0$ otherwise. And the conditional probability $\hat{F}_i$ equals his posterior worst case conditional distribution $F_i$.  
\end{theorem}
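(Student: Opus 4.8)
The plan is to start from the reduced form of the receiver's ex ante problem displayed immediately above the statement,
$$\sup_{(\hat a_1,\dots,\hat a_n)}\ \inf_{(\hat p_1,\dots,\hat p_n)}\ \sum_{i=1}^{n}\hat p_i\,C_i(\hat a_i),$$
and solve it directly. Here, for each interval $i$, $C_i(\hat a_i)$ is the value of that interval's posterior worst-case problem when the receiver plays $\hat a_i$: it is the constant obtained after substituting the Theorem 2 worst-case density and minimizing over the conditional distribution $\hat F_i$, so that the minimizing $\hat F_i$ is exactly that worst-case density; $C_i^{*}=\sup_{\hat a_i}C_i(\hat a_i)$ is attained at the posterior optimal action $a_i^{*}$ identified in Section 4 (whose existence is Theorem 3); and $\hat C_i=\inf_{\hat a_i}C_i(\hat a_i)$, so that $C_i(\hat a_i)\in[\hat C_i,C_i^{*}]$ and choosing $\hat a_i$ is equivalent to choosing a point of this interval. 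The passage from the original ex ante objective to this reduced form uses Lemma 1 to write the relative-entropy term as the $\hat p$-weighted sum $\sum_i\hat p_i R(\hat F_i||G_i)$, together with the fact that, since each $\hat p_i\ge 0$, the inner minimization over $(\hat F_1,\dots,\hat F_n)$ decouples across intervals (with the convention $0\log 0=0$ handling any interval carrying zero weight, on which the conditional is immaterial).

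Given the reduced form, I would first carry out the inner infimum over the weights. For a fixed action profile the map $(\hat p_i)\mapsto\sum_i\hat p_i C_i(\hat a_i)$ is linear on the simplex, so its infimum is $\min_i C_i(\hat a_i)$, attained at the vertex placing unit mass on any index that minimizes $C_i(\hat a_i)$. The receiver's problem is thus $\sup_{(\hat a_1,\dots,\hat a_n)}\min_i C_i(\hat a_i)$. The crucial structural observation is that the receiver's choice set is a product set and $\hat a_i$ affects only the $i$-th term, so she may optimize coordinate by coordinate: for every profile, $\min_i C_i(\hat a_i)\le\min_i C_i^{*}$ because $C_i(\hat a_i)\le C_i^{*}$ for each $i$, while choosing $\hat a_i=a_i^{*}$ for all $i$ gives $C_i(a_i^{*})=C_i^{*}$ and hence $\min_i C_i(a_i^{*})=\min_i C_i^{*}$. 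Therefore the supremum equals $\min_i C_i^{*}$ and is attained at $\hat a_i^{*}=a_i^{*}$ for every $i$, which is precisely the dynamic-consistency claim.

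It then remains to read off nature's best response at this optimum. With $\hat a_i=a_i^{*}$ we have $C_i(a_i^{*})=C_i^{*}$, so $\inf_{(\hat p_i)}\sum_i\hat p_i C_i^{*}=\min_i C_i^{*}$ is attained by putting $\hat p_i=1$ on an index $i$ that achieves $\min_i C_i^{*}$ and $\hat p_j=0$ for $j\ne i$; and, as recalled above, the interval-$i$ minimizer over conditional distributions is the posterior worst-case distribution $F_i$ of Theorem 2 --- on the zero-weight intervals this choice is a harmless normalization, and on the selected interval it is forced. This yields all three conclusions of the theorem.

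The step that requires genuine care is the one packaged into the first paragraph: checking that the double infimum over $(\hat p_i)$ and $(\hat F_i)$ really does factorize as ``worst conditional within each interval, then worst interval,'' and that this stays valid at the boundary $\hat p_i=0$, where interval $i$'s relative-entropy contribution must be interpreted through the convention $0\log 0=0$ consistent with the definition of $R(\cdot||\cdot)$. Once the problem is in the displayed reduced form there is nothing subtle left: it is the maximization of $\min_i C_i(\hat a_i)$ over a product domain, no minimax theorem is needed, and the interchange $\sup_{(\hat a_i)}\min_i C_i(\hat a_i)=\min_i\sup_{\hat a_i}C_i(\hat a_i)$ holds for the trivial reason that the receiver optimizes each coordinate independently.
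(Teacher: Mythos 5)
Your proposal is correct and reaches the same value $\min_i C_i^*$ and the same optimizers, but the key interchange is handled differently from the paper. The paper reduces to $\sup_{\Pi C_i}\inf_{p}\sum_i p_i C_i$ and then invokes a saddle point (minimax) theorem to swap $\sup$ and $\inf$, computing $\inf_p\sup_C\sum_i p_iC_i=\min_i C_i^*$; this works because the objective is bilinear and the domains (a simplex and a product of intervals) are compact and convex, though the paper's displayed domains contain a typo ($\times[\hat C_i,C_i^*]$ versus $\times[0,C_i^*]$). You instead perform the inner infimum over $\hat p$ first, obtaining $\sup_{(\hat a_i)}\min_i C_i(\hat a_i)$, and observe that the product structure of the action space lets the receiver optimize coordinatewise, so the interchange $\sup\min=\min\sup$ is trivial and no minimax theorem is needed. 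Your route is more elementary and, in addition, makes explicit two points the paper leaves implicit: that the joint infimum over $(\hat p,\hat F_1,\dots,\hat F_n)$ factorizes into ``worst conditional per interval, then worst interval'' via Lemma 1 and the decoupling of the weighted sum, and that intervals receiving weight $\hat p_i=0$ require the convention $0\log 0=0$, with the conditional $\hat F_i$ there pinned down only as a normalization. Both arguments are valid; yours buys a self-contained elementary proof, the paper's buys brevity at the cost of citing a minimax theorem whose hypotheses it does not verify.
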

\par
We note that the receiver's optimal action in the posterior ambiguity case and ex ante ambiguity case is dynamically consistent. The receiver's ex ante optimal strategy is to choose his optimal posterior strategy in each interval separately. This is quite good news for us as dynamical consistency and ambiguity aversion are always thought to have an intrinsic tension between them.
\par
Such consistency makes welfare analysis more easily. As the receiver's ex ante action is the same as her posterior action, we only need to consider her posterior welfare, and her ex ante welfare is the worst interval's posterior welfare.

\section{Conclusion}
\par
This paper explores how ambiguity affects communication. We consider a simple cheap talk model in which the receiver evaluates the informed sender's message with respect to its worst case expected payoff generated by multiplier preferences. 
\par
We show in the full ambiguity example and uniform reference probability example, that the receiver's optimal action is to choose the middle point. The receiver's action is consistent in posterior ambiguity case and ex ante ambiguity case.
\par
We find that ambiguity improves communication by shifting upwards the receiver's optimal action, the number of receivers (the number of reference probabilities) who are ambiguous and more preferred by the sender compared with the bayesian receivers is rich: it is no less than the number of receivers who are ambiguous but less preferred by the sender compared with the bayesian receivers. 
\par
We now discuss promising directions for future research: 
\begin{itemize}
\item 
What are the sufficient conditions for monotone property? Or for all probability distributions, there is no strict monotone property?
\item
What are the necessary conditions for the solution lies between the middle point and $g_m(\theta)$'s conditional mean?
\item
What are the sufficient conditions for the reference probability to improve communication?
\end{itemize}

\section{Appendix}
\subsection{Proof of proposition 1}
\begin{lemma}
 $D^*=\{ d \in D: \sigma^*(m)=d \}$, $D^*$ is finite.
\end{lemma}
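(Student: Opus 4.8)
The plan is to read $D^{*}$ as the image of the receiver's equilibrium strategy $\sigma^{*}$, i.e.\ the set of actions induced by messages that are sent with positive probability in equilibrium, and to show $|D^{*}|\le 1+\lfloor 1/(2d)\rfloor<\infty$ (we take $d>0$; for $d=0$ every type separates and the partitional statement is understood to be vacuous). First I would record the standard cheap-talk best-response structure. Since $u(\theta,a)=-(a-\theta-d)^{2}$ is strictly single-peaked in $a$ with peak $\theta+d$, in any equilibrium a sender of type $\theta$ can send only a message inducing an action of $D^{*}$ closest to $\theta+d$. Hence, for any two induced actions $a<a'$, the set of types weakly preferring $a$ to $a'$ is the lower interval $\{\theta:\theta+d\le (a+a')/2\}$, so $D^{*}=\{a_{1}<a_{2}<\dots<a_{N}\}$ corresponds to a partition $0=\theta_{0}<\theta_{1}<\dots<\theta_{N}=1$ with boundary (indifferent) types $\theta_{i}$; once $N<\infty$ is known, this is precisely Proposition~1.

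The core step is a uniform lower bound on the gap between consecutive induced actions. Equating the two quadratic losses at the boundary type $\theta_{i}$, namely $(a_{i}-\theta_{i}-d)^{2}=(a_{i+1}-\theta_{i}-d)^{2}$, excludes the root $a_{i}=a_{i+1}$ and therefore forces $a_{i}+a_{i+1}=2(\theta_{i}+d)$, i.e.\ $a_{i+1}-a_{i}=2(\theta_{i}+d-a_{i})$. It thus suffices to show $a_{i}\le\theta_{i}$. This holds because $a_{i}$ is a best response to a belief supported in $[\theta_{i-1},\theta_{i}]$: the conditional reference measure $g_{m_{i}}$, and hence the worst-case $f_{m_{i}}$ of Theorem~2, is supported there, the entropy term $\beta R(F\|G)$ does not depend on $a$, and for any $a>\theta_{i}$ the action $\theta_{i}$ yields strictly smaller expected quadratic loss against every such belief, so no $a>\theta_{i}$ is optimal. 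Equivalently, by Theorem~2, $a_{i}$ is the mean of a distribution supported on $[\theta_{i-1},\theta_{i}]$, hence lies in that interval; in the $\beta=0$ case of Theorem~1 it is the midpoint, again in the interval. Consequently $a_{i+1}-a_{i}\ge 2d>0$.

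Finally, every induced action lies in $[0,1]$ by the same containment argument, so $a_{1}<\dots<a_{N}$ is a strictly increasing sequence in $[0,1]$ with consecutive gaps at least $2d$, giving $1\ge a_{N}-a_{1}\ge 2d(N-1)$ and hence $N\le 1+\lfloor 1/(2d)\rfloor$; thus $D^{*}$ is finite, and reading off $\theta_{i}=(a_{i}+a_{i+1})/2-d$ recovers the partitional form of Proposition~1. I expect the only genuine obstacle to be the inequality $a_{i}\le\theta_{i}$: it relies on the worst-case belief being supported in the announcement interval (Assumption~1, Theorem~2) and on the receiver's problem admitting a solution there (Theorem~3). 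Besides this, one should dispose of off-path messages and the measure-zero set of boundary types with the usual cheap-talk conventions, and note separately that the babbling case $N=1$ is trivially partitional.
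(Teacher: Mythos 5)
Your argument is the same Crawford--Sobel separation argument the paper uses: an indifferent boundary type between two induced actions $a<a'$ forces $a'-a$ to be bounded below by a multiple of $d$, and since all induced actions lie in a bounded interval, only finitely many can coexist. Your version is in fact more complete than the paper's (which asserts $a'-a\geq d$ without spelling out why $a\leq\tilde{\theta}$, and omits the final counting step), and your justification that the optimal action lies in the announcement interval via Theorems 1--2 correctly fills the one gap the paper leaves implicit.
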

\begin{proof}
suppose $a<a'$ are two different actions, by continuity there exists a $\tilde{\theta}$ such that $$u^s(a,\tilde{\theta},d)=u^s(a',\tilde{\theta},d)$$
thus $$a \leq \tilde{\theta}+d \leq a'$$
since all $\theta>\tilde{\theta}$ prefer $a'$ to $a$.
$$a'-a \geq d$$
so $D^*$ is finite.
\end{proof}

\begin{lemma}
$\Theta^*(a)=\{\theta: \sigma^*_R(\sigma^*_S(\theta))=a \}$ is connected.
\end{lemma}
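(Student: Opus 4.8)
The plan is to run the by now standard single--crossing argument for quadratic--loss cheap talk. Fix an equilibrium $(\sigma_S^*,\sigma_R^*)$ and an action $a$ that is induced by some type in equilibrium, and suppose for contradiction that $\Theta^*(a)$ fails to be an interval: there exist types $\theta_1<\theta<\theta_2$ with $\theta_1,\theta_2\in\Theta^*(a)$ but $\sigma_R^*(\sigma_S^*(\theta))=a'$ for some induced action $a'\neq a$. The device that makes everything work is that, with sender payoff $u(\theta,a)=-(a-\theta-d)^2$, for any two \emph{fixed} distinct actions the type--indexed payoff gap
$$\phi(\tau):=u(\tau,a)-u(\tau,a')=(a'-a)\bigl(a+a'-2d-2\tau\bigr)$$
is \emph{affine} in $\tau$, with slope $-2(a'-a)\neq 0$.

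First I would record the incentive constraints. Any type $\tau'$ can mimic any type $\tau''$ by sending the message $\sigma_S^*(\tau'')$ and thereby obtaining the action $\sigma_R^*(\sigma_S^*(\tau''))$; hence in equilibrium $\tau'$ weakly prefers its own induced action to the action induced by any other type (the finiteness lemma just proved is not even needed here, only that $a$ and $a'$ are each induced by someone). Applying this three times: $\theta_1$ prefers $a$ to $a'$, so $\phi(\theta_1)\ge 0$; $\theta_2$ prefers $a$ to $a'$, so $\phi(\theta_2)\ge 0$; and $\theta$ prefers $a'$ to $a$, so $\phi(\theta)\le 0$. Now the affine function $\phi$ is nonnegative at the two points $\theta_1<\theta_2$, hence nonnegative on the whole segment $[\theta_1,\theta_2]$, so $\phi(\theta)\ge 0$ and therefore $\phi(\theta)=0$. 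But $\phi(\theta)$ equals the convex combination $\lambda\phi(\theta_1)+(1-\lambda)\phi(\theta_2)$ with $\lambda=\frac{\theta_2-\theta}{\theta_2-\theta_1}\in(0,1)$ of two nonnegative numbers, so $\phi(\theta_1)=\phi(\theta_2)=0$; an affine function vanishing at two distinct points is identically zero, which forces its slope $-2(a'-a)$ to vanish, i.e.\ $a=a'$ --- a contradiction. Hence no intermediate type induces an action other than $a$, so $\Theta^*(a)$ is an interval and therefore connected.

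I do not anticipate a genuine obstacle, since this is essentially the Crawford--Sobel monotonicity step; the one point that needs a word of care is the indifference case $\phi(\theta)=0$, which the affine structure disposes of automatically because a non--constant affine function cannot vanish at an interior point of a segment on whose endpoints it is nonnegative. One should also confirm that ``mimicking another type's message'' is indeed a feasible deviation in the communication game --- which it is by definition --- so that the comparison set in each type's optimality condition really does contain every on--path action.
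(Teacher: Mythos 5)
Your proof is correct and follows essentially the same route as the paper: both arguments invoke the incentive constraints of the two outer types and the intermediate type and exploit the single-crossing structure of the quadratic loss, the only cosmetic difference being that you package it as nonnegativity of the affine payoff gap $\phi(\tau)=u(\tau,a)-u(\tau,a')$ at the endpoints, while the paper adds the pairwise constraints to obtain $a\le\tilde a$ and $a\ge\tilde a$ directly. No gap; the conclusion $a=\tilde a$ is reached the same way.
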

\begin{proof}
 suppose $\theta$, $\theta' \in \Theta^*(a)$ and $\theta < \theta'$. Consider any $\tilde{\theta} \in (\theta, \theta')$, $\tilde{a}$ is the induced action $$-(a-\theta -d)^2 \geq -(\tilde{a}-\theta -d)^2$$ $$-(\tilde{a}-\tilde{\theta} -d)^2 \geq -(a-\tilde{\theta} -d)^2$$ We have $$a \leq \tilde{a}$$ Similarly, $$a \geq \tilde{a}$$ So we have $$a = \tilde{a}$$
\end{proof}

\subsection{Proof of theorem 1}

\begin{lemma}
 if $F(.)$ is non-decreasing, $\theta_{i-1} \leq a \leq \theta_{i}$, there exists a $t$ such that the solution of $\mathop{inf}\limits_{F} \int_{\theta_{i-1}}^{\theta_{i}} (\theta-a)F(\theta)$ satisfies $F(\theta)=t$ $a.e. \ \theta \in [\theta_{i-1},\theta_{i}]$.
\end{lemma}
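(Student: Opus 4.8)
The plan is to recognize the functional in the lemma as, up to an additive constant, a negative multiple of the $F$-expectation of a convex function of $\theta$, and then invoke the elementary fact that such an expectation is extremized over distributions on an interval by a two-point distribution on its endpoints; a CDF of that form is exactly a function that is constant a.e.\ on the interior, which is the assertion.

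First I would pin down the feasible set. In the full-ambiguity step, after message $m_i$ the receiver knows only that $\theta$ lies in the equilibrium cell $[\theta_{i-1},\theta_i]$, so nature picks a probability distribution with support in that cell; writing $F$ for its CDF, $F$ is non-decreasing with $F(\theta_{i-1}^-)=0$ and $F(\theta_i)=1$, and the quantity to be minimized is $J(F)=\int_{\theta_{i-1}}^{\theta_i}(\theta-a)F(\theta)\,d\theta$. A Fubini argument (equivalently, Lebesgue--Stieltjes integration by parts) gives
$$J(F)=\int_{\theta_{i-1}}^{\theta_i}\!\Big(\int_{s}^{\theta_i}(\theta-a)\,d\theta\Big)\,dF(s)=\frac{(\theta_i-a)^2}{2}-\frac12\int_{\theta_{i-1}}^{\theta_i}(s-a)^2\,dF(s),$$
so minimizing $J(F)$ is the same as maximizing $E_F[(\theta-a)^2]$ over distributions on $[\theta_{i-1},\theta_i]$ --- the familiar statement that nature's worst case loads as much weight as possible far from $a$.

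Next I would use convexity of $\theta\mapsto(\theta-a)^2$: for $\theta=\lambda\theta_{i-1}+(1-\lambda)\theta_i$ one has $(\theta-a)^2\le\lambda(\theta_{i-1}-a)^2+(1-\lambda)(\theta_i-a)^2$, so replacing any $F$ by the two-point distribution on $\{\theta_{i-1},\theta_i\}$ with the same mean weakly increases $E_F[(\theta-a)^2]$; and since $t\mapsto t(\theta_{i-1}-a)^2+(1-t)(\theta_i-a)^2$ is affine on the compact interval $[0,1]$, the maximum over such two-point distributions is attained --- at $t=1$ if $a$ is closer to $\theta_i$, at $t=0$ if $a$ is closer to $\theta_{i-1}$, and at any $t$ if $a=\frac{\theta_{i-1}+\theta_i}{2}$. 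Hence the infimum in the lemma is attained by an $F$ of the form $t\,\delta_{\theta_{i-1}}+(1-t)\,\delta_{\theta_i}$, whose CDF equals the constant $t$ for every $\theta\in(\theta_{i-1},\theta_i)$, i.e.\ $F(\theta)=t$ a.e.\ on $[\theta_{i-1},\theta_i]$, as claimed.

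The step needing the most care is bookkeeping rather than conceptual: justifying the Fubini/integration-by-parts identity for a general CDF (atoms at the endpoints must be tracked through the continuity conventions), confirming that nature's feasible set is exactly the probability measures on the \emph{closed} cell so that the normalization $F(\theta_i)=1$ --- which is what produces the constant $(\theta_i-a)^2/2$ --- is legitimate, and noting that for a monotone $F$, ``constant a.e.\ on the interior'' is equivalent to the measure being supported on $\{\theta_{i-1},\theta_i\}$. Everything beyond that is the one-line Jensen inequality for the parabola. With the lemma in hand, Theorem~1 follows by minimizing $t(\theta_{i-1}-a)^2+(1-t)(\theta_i-a)^2$ over $t\in[0,1]$ and then maximizing the resulting worst-case payoff over $a$, which equalizes the two squared distances exactly at $a=\frac{\theta_{i-1}+\theta_i}{2}$.
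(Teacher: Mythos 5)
Your proof is correct, but it takes a genuinely different route from the paper's. The paper argues directly from monotonicity: since $F$ is non-decreasing and $\theta_{i-1}\le a\le\theta_i$, the integrand satisfies the pointwise Chebyshev-type inequality $(\theta-a)\bigl(F(\theta)-F(a)\bigr)\ge 0$, so $\int(\theta-a)F(\theta)\,d\theta\ge F(a)\int(\theta-a)\,d\theta$, with equality only if $F(\theta)=F(a)$ a.e.; hence replacing $F$ by the constant $F(a)$ weakly improves the objective and strictly improves it unless $F$ is already a.e.\ constant, which characterizes \emph{every} minimizer as a.e.\ constant without ever computing the value of the objective, and works verbatim when the boundary values $F(\theta_{i-1}+\epsilon)$, $F(\theta_i-\epsilon)$ are held fixed (which is how the lemma is actually invoked in the proof of Theorem 1). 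Your route instead integrates by parts to rewrite the objective as $\tfrac{(\theta_i-a)^2}{2}-\tfrac12 E_F[(\theta-a)^2]$ and then applies Jensen to the parabola. What this buys is interpretive and quantitative content the paper's proof does not give: nature's problem is revealed as maximizing the second moment about $a$, the optimal $t$ is identified explicitly, and the midpoint result of Theorem 1 drops out immediately. The costs are the normalization bookkeeping you flag (the identity as you write it needs $F(\theta_i)=1$, whereas the lemma is applied on a truncated interval with fixed, non-unit boundary mass --- though the argument survives with the obvious modification), and that, as written, you produce \emph{an} a.e.-constant minimizer rather than showing all minimizers have this form; strict convexity of $(\theta-a)^2$ closes that gap, but you should say so. Both arguments are sound.
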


\begin{proof}
 Because $|(\theta-a)F(\theta)| \leq |\theta_iF(\theta_i)|$, so $$\int_{\theta_{i-1}}^{\theta_i}|(\theta-a)F(\theta)|d\theta \leq \int_{\theta_{i-1}}^{\theta_i}\theta_iF(\theta_i)d\theta< \infty.$$ Similarly, $$\int_{\theta_{i-1}}^{\theta_i}|(\theta-a)F(a)|d\theta \leq \int_{\theta_{i-1}}^{\theta_i}\theta_iF(\theta_i)d\theta < \infty.$$ So $(\theta-a)F(\theta)$ and $(\theta-a)F(a)$ are Lebesgue integrable. Because $F(.)$ is non-decresing, we have $(\theta-a)(F(\theta)-F(a)) \geq 0$. So $$\int_{\theta_{i-1}}^{\theta_i}(\theta-a)(F(\theta)-F(a))d\theta \geq 0$$ If $\int_{\theta_{i-1}}^{\theta_i}(\theta-a)(F(\theta)-F(a))d\theta = 0$, we want to show $F(\theta)=t$ $a.e. \theta \in [\theta_{i-1},\theta_{i}]$. Suppose not, there exists $E=\{ \theta: F(\theta) \neq F(a)\}$ such that $m(E)>0$.
$$\int_{\theta_{i-1}}^{\theta_i}(\theta-a)(F(\theta)-F(a))d\theta \geq \int_{E}(\theta-a)(F(\theta)-F(a))d\theta>0$$ Contradict! So $F(\theta)=t$ $a.e. \theta \in [\theta_{i-1},\theta_{i}]$. 
\end{proof}

\begin{lemma}
$$F(\theta_{i-1})=F(a)$$ 
\end{lemma}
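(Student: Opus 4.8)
The plan is to identify the object $F$ here as the worst-case cumulative distribution function attaining $\inf_F\int_{\theta_{i-1}}^{\theta_i}(\theta-a)F(\theta)\,d\theta$ — the minimization to which the receiver's inner problem has just been reduced — and to read the claim $F(\theta_{i-1})=F(a)$ as saying that this $F$ assigns no mass to the half-open interval $(\theta_{i-1},a]$. Since $\theta_{i-1}\le a$, monotonicity already gives $F(\theta_{i-1})\le F(a)$, so it suffices to rule out $F(\theta_{i-1})<F(a)$; the case $a=\theta_{i-1}$ is trivial, so I assume $a>\theta_{i-1}$.

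The main step is an exchange (mass-transport) argument. Supposing $F(\theta_{i-1})<F(a)$ at a minimizer $F$, I would form the competitor $\tilde F$ that pushes all of $F$'s mass on $(\theta_{i-1},a]$ down onto the left endpoint, i.e. $\tilde F(\theta)=F(a)$ for $\theta\in[\theta_{i-1},a]$ and $\tilde F(\theta)=F(\theta)$ for $\theta\in(a,\theta_i]$. Then $\tilde F$ is non-decreasing and right-continuous with $\tilde F(\theta_i)=1$, hence admissible, and $\tilde F\ge F$ pointwise with $\tilde F>F$ on a set of positive Lebesgue measure: right-continuity of $F$ at $\theta_{i-1}$ together with $F(\theta_{i-1})<F(a)$ forces $F<F(a)$ on some nondegenerate interval $[\theta_{i-1},\theta_{i-1}+\varepsilon)$. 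Consequently
$$\int_{\theta_{i-1}}^{\theta_i}(\theta-a)\tilde F(\theta)\,d\theta-\int_{\theta_{i-1}}^{\theta_i}(\theta-a)F(\theta)\,d\theta=\int_{\theta_{i-1}}^{a}(\theta-a)\bigl(F(a)-F(\theta)\bigr)\,d\theta<0,$$
because on $[\theta_{i-1},a)$ the factor $\theta-a$ is negative, the factor $F(a)-F(\theta)$ is nonnegative, and their product is strictly negative on the positive-measure interval just identified. This contradicts the minimality of $F$, so $F(\theta_{i-1})=F(a)$. Equivalently, one can read the identity straight off the preceding lemma: it produces a constant $t$ with $F=t$ a.e.\ on $[\theta_{i-1},\theta_i]$, and a monotone function equal a.e.\ to a constant equals that constant at every interior point, so right-continuity at $\theta_{i-1}$ yields $F(\theta_{i-1})=t=F(a)$.

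The delicate point — the step I expect to require the most care — is the passage from an almost-everywhere statement to equality at the two distinguished points $\theta_{i-1}$ and $a$: one must lean on monotonicity and (one-sided) continuity of the CDF, so that the exceptional null set cannot conceal a jump at the endpoint $\theta_{i-1}$; and in the exchange version one must likewise check that $\tilde F$ is a bona fide distribution function and that the improvement is strict rather than merely weak, which again rests on $F(\theta_{i-1})<F(a)$ producing an entire interval on which $F<F(a)$.
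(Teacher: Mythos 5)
Your closing ``equivalently'' paragraph is exactly the paper's proof: the paper's Lemma 4 argument consists of citing the preceding lemma (so $F(\theta)=F(a)$ for a.e.\ $\theta$ in the open interval) and invoking right-continuity of $F$ at $\theta_{i-1}$ to extend the identity to the left endpoint. So your proposal is correct and, in that branch, takes essentially the same route. Your first argument --- the exchange/mass-transport competitor $\tilde F$ that flattens $F$ to the constant $F(a)$ on $[\theta_{i-1},a]$ --- is a genuinely different and self-contained derivation that does not lean on the a.e.\ conclusion of the previous lemma; its one extra ingredient is the observation that $F(\theta_{i-1})<F(a)$ plus right-continuity forces $F<F(a)$ on an interval of positive measure, which is what converts a pointwise hypothesis at a null set into a strict improvement of the integral. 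That version buys robustness (it would survive even if the previous lemma were stated only for the open interval, and it makes explicit why the endpoint value cannot be pinned down by the integral alone but only via right-continuity), at the cost of needing to verify admissibility of $\tilde F$; the paper's version is shorter but silently relies on the previous lemma having already identified the minimizer up to null sets. Both are sound.
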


\begin{proof}
Because $$F(\theta)=F(a)$$, for every $\theta \in (\theta_{i-1},\theta_{i})$ and $F(.)$ is right-continuous,
we have $$F(\theta_{i-1})=F(a)$$ 
\end{proof}

\begin{lemma}
The worst case probability distribution is a Bernoulli distribution \footnote{We consider $\theta \in (\theta_{i-1}+\epsilon, \theta_i-\epsilon)$ not $\theta \in (\theta_{i-1}, \theta_i)$ because the Bernoulli distribution is discontinuous and we want to ensure the c.d.f. to be absolute continuous in above interval so that we can integral by parts.}.
\end{lemma}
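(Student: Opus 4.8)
The plan is to obtain this Lemma by chaining the two preceding lemmas of this subsection. Fix the receiver's action $a$; by the argument underlying Theorem~1 we may assume $a\in[\theta_{i-1},\theta_i]$, since if $a$ lies outside the interval nature's best response is already a (degenerate) point mass at the nearer endpoint. For such an $a$ the receiver's inner problem is
$$\sup_{F}\ \int_{\theta_{i-1}}^{\theta_i}(a-\theta)^2\,dF(\theta),$$
and the first step is to re-express this objective through the c.d.f.\ $F$ rather than the measure $dF$. Integrating by parts on the shrunk interval $[\theta_{i-1}+\epsilon,\theta_i-\epsilon]$ — which is legitimate there because $F$ can be taken absolutely continuous on it, as the footnote indicates — and collecting the boundary terms, the problem becomes, up to an additive term and a sign, the object $\inf_{F}\int(\theta-a)F(\theta)\,d\theta$ analysed in the first lemma of this subsection.

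Next I would invoke that lemma: any minimizing c.d.f.\ satisfies $F(\theta)=t$ a.e.\ on $[\theta_{i-1}+\epsilon,\theta_i-\epsilon]$ for some constant $t$. Letting $\epsilon\downarrow 0$ gives $F\equiv t$ a.e.\ on $(\theta_{i-1},\theta_i)$, and the second lemma then upgrades this to an everywhere statement: since $F$ is right-continuous and equals $t$ just to the right of $\theta_{i-1}$, we get $F(\theta_{i-1})=F(a)=t$, hence $F(\theta)=t$ for every $\theta\in[\theta_{i-1},\theta_i)$. Because $F$ is the c.d.f.\ of a distribution supported on $[\theta_{i-1},\theta_i]$ we also have $F(\theta_i)=1$. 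Thus $F$ is a step function jumping by $t$ at $\theta_{i-1}$ and by $1-t$ at $\theta_i$; equivalently nature places mass $t$ on $\theta_{i-1}$ and mass $1-t$ on $\theta_i$, i.e.\ the worst-case distribution is the two-point (Bernoulli) distribution on $\{\theta_{i-1},\theta_i\}$, as claimed. Substituting this two-point family back into $\sup_a\inf_F-(a-\theta)^2$ and optimizing over $t$ and $a$ then recovers $t=\tfrac12$ and $a=\tfrac{\theta_{i-1}+\theta_i}{2}$, completing Theorem~1.

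The step I expect to be delicate is precisely the passage from the measure formulation $\int(a-\theta)^2\,dF$ to the c.d.f.\ formulation $\int(\theta-a)F(\theta)\,d\theta$: the candidate optimum is discontinuous, so integration by parts is only valid after restricting to $[\theta_{i-1}+\epsilon,\theta_i-\epsilon]$, and one must verify that the boundary contributions — together with whatever mass the optimal $F$ might concentrate in the two shrinking end-strips $[\theta_{i-1},\theta_{i-1}+\epsilon]$ and $[\theta_i-\epsilon,\theta_i]$ — behave correctly as $\epsilon\downarrow 0$ and do not generate any spurious atom in the interior. Once this limiting bookkeeping is settled, the remainder is a direct application of the lemmas already established.
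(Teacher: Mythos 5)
Your argument follows the paper's proof essentially step for step: integrate by parts on the shrunk interval $[\theta_{i-1}+\epsilon,\theta_i-\epsilon]$, apply the two preceding lemmas to force $F$ to be constant a.e.\ in the interior, and conclude that all mass sits on the two endpoints; the paper then pins down $a^*=\frac{\theta_{i-1}+\theta_i}{2}$ and $p^*=\frac12$ via a zero-sum-game saddle-point (no-deviation) argument rather than your direct substitution and optimization over $(t,a)$, but both routes close the argument, and your flagged concern about the $\epsilon\downarrow 0$ bookkeeping is handled no more rigorously in the paper than in your sketch. One small slip: when $a$ lies outside $[\theta_{i-1},\theta_i]$, nature --- who \emph{maximizes} $(a-\theta)^2$ --- concentrates mass on the \emph{farther} endpoint, not the nearer one, though this does not affect the conclusion that the worst case is a (possibly degenerate) two-point distribution.
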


\begin{proof}
We want to solve $$V=\mathop{sup}\limits_{a \in A}\mathop{inf}\limits_{\mathcal{F}}E[v(a,\theta)|m]$$
Consider $\theta \in (\theta_{i-1}+\epsilon,\theta_i-\epsilon)$\\
$\begin{aligned}
E[\theta|M']
&=\int_{\theta_{i-1}+\epsilon}^{\theta_i-\epsilon} \frac{\theta f(\theta)}{\int_{\theta_{i-1}+\epsilon}^{\theta_i-\epsilon}f(\theta)d\theta}d\theta\\
&=\int_{\theta_{i-1}+\epsilon}^{\theta_i-\epsilon} \frac{\theta dF(\theta)}{F(\theta_i)-F(\theta_{i-1})}\\
&=\frac{(\theta_i-\epsilon)F(\theta_i-\epsilon)-(\theta_{i-1}+\epsilon)F(\theta_{i-1}+\epsilon)}{F(\theta_i-\epsilon)-F(\theta_{i-1}+\epsilon)}-\int_{\theta_{i-1}+\epsilon}^{\theta_i-\epsilon}\frac{F(\theta)}{F(\theta_i-\epsilon)-F(\theta_{i-1}+\epsilon)} d\theta 
\end{aligned}$
\\
\\
$\begin{aligned}
E[{\theta}^2|M']
&=\frac{(\theta_i-\epsilon)^2F(\theta_i-\epsilon)-(\theta_{i-1}+\epsilon)^2F(\theta_{i-1}+\epsilon)}{F(\theta_i-\epsilon)-F(\theta_{i-1}+\epsilon)}-2\int_{\theta_{i-1}+\epsilon}^{\theta_i-\epsilon}\frac{\theta F(\theta)}{F(\theta_i-\epsilon)-F(\theta_{i-1}+\epsilon)} d\theta
\end{aligned}
$
\\
\\
$
\begin{aligned}
\mathop{inf}\limits_{\mathcal{F}} E[v(a,\theta)|m']
&=\mathop{inf}\limits_{\mathcal{F}} -a^2+\frac{2a(\theta_i-\epsilon)F(\theta_i-\epsilon)-2a(\theta_{i-1}+\epsilon)F(\theta_{i-1}+\epsilon)}{F(\theta_i-\epsilon)-F(\theta_{i-1}+\epsilon)}\\
&+\frac{({\theta_{i-1}}+\epsilon)^2F(\theta_{i-1}+\epsilon)-({\theta_i}-\epsilon)^2F(\theta_i-\epsilon)}{F(\theta_i-\epsilon)-F(\theta_{i-1}+\epsilon)}\\
&+2\int_{\theta_{i-1}+\epsilon}^{\theta_i-\epsilon}\frac{\theta F(\theta)-aF(\theta)}{F(\theta_i-\epsilon)-F(\theta_{i-1}+\epsilon)} d\theta 
\end{aligned}
$
\par
We first fix $F(\theta_i-\epsilon)$ and $F(\theta_{i-1}+\epsilon)$ and optimize for $F(\theta)$ where $\theta \in (\theta_{i-1}+\epsilon,\theta_i-\epsilon)$. From lemma 3 and lemma 4 we have $$F(\theta)=F(\theta_{i-1}+\epsilon)$$ 
$$a.e. \quad \theta \in (\theta_{i-1}+\epsilon,\theta_i-\epsilon)$$
\par
Following Carrasco et.al. (2018)'s method, We can think of the receiver's problem as a zero-sum game between the receiver and the Nature in which after observing the sender's messages, the receiver chooses the action, Nature chooses the distribution, Nature's payoff is the negative of the receiver's. The Nash equilibrium of this zero-sum game corresponds to a saddle point of the receiver's problem which is the receiver's maxmin strategy according to the saddle point's properties.
\par
We will show that $a^*=\frac{\theta_{i-1}+\theta_{i}}{2}$
and $p^*=\frac{1}{2}$ is the Nash equilibrium of the zero-sum game by arguing they will not deviate. If the receiver deviates to $a'<a^*$, given Nature's strategy, the receiver is worse off so he will not deviate to $a'<a^*$. Similarly, the receiver will not deviate to $a''>a^*$. If the Nature deviates to $p(\theta_i)>p(\theta_{i-1})$, given the receiver's strategy, Nature's payoff will not change so the Nature will not deviate. The same argument holds for Nature will not deviate to $p(\theta_i)<p(\theta_{i-1})$. 
\end{proof}

\subsection{Proof of theorem 2}
\begin{proof}
$$L(F_m,f_m,\theta)=(2\theta-2a)F_m(\theta)+\beta f_m(\theta)log(f_m(\theta))-\beta f_m(\theta)log(g_m(\theta))$$
$$L_{F_m}(F_m,f_m,\theta)=\frac{\partial L_{f_m}(F_m,f_m,\theta)}{\partial \theta}$$
We have $$2\theta-2a=\frac{{f'}_m(\theta)}{f_m(\theta)}\beta-\frac{{g'}_m(\theta)}{g_m(\theta)}\beta$$
$$\frac{df_m}{f_m}+\frac{2}{\gamma}(a-\theta)d \theta-\frac{dg_m}{g_m}=0$$
$$f_m(\theta)=exp(C+\frac{(\theta-a)^2}{\beta})g_m(\theta).$$
\end{proof}

\subsection{Proof of theorem 3}
\begin{proof}
\par
Define a mapping $L:A  \rightarrow A $ by 
$$L=\frac{\int_{\theta_{i-1}}^{\theta_i} \theta exp(\frac{(\theta-a)^2}{\beta})g_m d\theta}{\int_{\theta_{i-1}}^{\theta_i} exp(\frac{(\theta-a)^2}{\beta})g_m d\theta}$$
This mapping is continuous. The domain of this mapping is $[\theta_{i-1},\theta_i]$ which is a convex compact subset of a Euclidean space. By Brouwer fixed point theorem there exists a fixed point $a^*$ such that $$a^*=\frac{\int_{\theta_{i-1}}^{\theta_i} \theta exp(\frac{(\theta-a)^2}{\beta})g_m(\theta) d\theta}{\int_{\theta_{i-1}}^{\theta_i} exp(\frac{(\theta-a)^2}{\beta})g_m(\theta) d\theta}$$ 
So $a^*$ is a solution to the above equation.
\end{proof}

\subsection{Proof of lemma 1}
\begin{proof}
$$
\begin{aligned}
&R(\sum_{i=1}^{n} p_i F_i|| \sum_{i=1}^{n} p_i G_i)\\
&=\int_{0}^{1} (\sum_{i=1}^{n} p_i f_i) \log(\frac{\sum_{i=1}^{n} p_i f_i}{\sum_{i=1}^{n} p_i g_i}) d\theta\\
&=\sum_{i=1}^{n}\int_{M_i} (\sum_{i=1}^{n} p_i f_i) \log(\frac{\sum_{i=1}^{n} p_i f_i}{\sum_{i=1}^{n} p_i g_i}) d\theta\\
&=\sum_{i=1}^{n}\int_{M_i} p_i f_i \log(\frac{p_i f_i}{p_i g_i}) d\theta\\
&=\sum_{i=1}^{n}p_i\int_{M_i}  f_i \log(\frac{f_i}{g_i}) d\theta\\
&=\sum_{i=1}^{n} p_i R(F_i|| G_i)
\end{aligned}
$$
\end{proof}

\subsection{Proof of proposition 2}
\begin{proof}
When $\beta \to \infty$, $\lim_{\beta \to \infty} exp(C+\frac{(\theta-a)^2}{\beta})=exp(C)=1$, so we have $\lim_{\beta \to \infty} f_m(\theta)=g_m(\theta)exp(C)=g_m(\theta)$.
\par
When $\beta \to 0$,  the worst case probability distribution converges to the Bernoulli distribution in probability, rest arguement is the same as the full ambiguity case.
\end{proof}

\subsection{Existence of Counterexample}
\begin{proof}
We only to consider the discontinuous point of $l(\theta)$.We construct the probability distribution function near $\theta=\frac{1}{4}$, other points can be constructed similarly. Consider the following 
\begin{equation}
g_m^n(\theta)=
\begin{cases}
0 & \text{$\theta \in [0,\frac{1}{4}-\epsilon]$}\\
\frac{7}{5\epsilon^2}(\theta+\epsilon-\frac{1}{4})^2 & \text{$\theta \in [\frac{1}{4}-\epsilon,\frac{1}{4}]$}\\
\frac{14}{5}-\frac{7}{5\epsilon^2}(\theta-\epsilon-\frac{1}{4})^2 & \text{$\theta \in [\frac{1}{4},\frac{1}{4}+\epsilon]$ }\\
\end{cases}
\end{equation}
Notice the constructed $g_m^n(\theta)$ is continuous and differentiable so $g_m^n(\theta) \in C^2[0,1]$. Let $A=[\frac{1}{4}-\epsilon,\frac{1}{4}+\epsilon]$, $\mu(A) \leq 2\epsilon$ and $$\lim_{n \to \infty} \sup_{\theta \notin A} |g_m^n(\theta)-l(\theta)|=0 $$
\par
So we have $g_m^n(\theta)$ almost uniform converges to $l(\theta)$.

\end{proof}

\subsection{Proof of proposition 3}
\begin{proof}
Notice $$f_m(\theta)=exp(C+\frac{(\theta-a)^2}{\beta})g_m(\theta)$$ When $g_m(\theta)$ is symmetry and  its axial of symmetry is $\frac{\theta_{i-1}+\theta_i}{2}$, when $a=\frac{\theta_{i-1}+\theta_i}{2}$, $f_m(\theta)$ is symmetry and  its axial of symmetry is $\frac{\theta_{i-1}+\theta_i}{2}$, then $a=E(\theta|M)$. Thus the middle point is the receiver problem's solution.

\end{proof}

\subsection{Proof of proposition 4}
\begin{proof}
We prove part (ii):
\par
Consider equilibrium with two intervals: $[0,\theta_1]$, $[\theta_1,1]$ and denote the induced actions as $a_1$ and $a_2$. In equilibrium, type $\theta_1$ should be indifferent between $a_1$ and $a_2$: $$\theta_1+d-a_1=a_2-\theta_1-d$$
As $a_1>h_1$, $a_2>h_2$ and $\hat{d}$ is the threshold when $\theta_1=0$, we have $$2\hat{d}=a_1+a_2-2\theta_1=a_1+a_2>h_1+h_2=2d^*$$
so we have $\hat{d}>d^*$

\end{proof}

%\subsection{Proof of theorem 5}

%$\begin{aligned}
%\mathop{sup}\limits_{a \in A} \mathop{inf}\limits_{\mathcal{F}} E[v(a,\theta)|m]+\beta R(F_m||Q_m)&=2a\int_{\theta_{i-1}}^{\theta_i} \theta f_m(\theta)d\theta-\int_{\theta_{i-1}}^{\theta_i} {\theta}^2 f_m(\theta)d\theta \\
%&-a^2+\beta\int_{\theta_{i-1}}^{\theta_i} f_m(\theta)log(f_m(\theta))d\theta\\
%\end{aligned}$
%\\
%\begin{proof}

%We directly use Euler-Lagrange equation to solve this problem. 
%\par
%$$L(F_m,f_m,\theta)=(2\theta-2a)F_m(\theta)+\beta f_m(\theta)log(f_m(\theta))$$
%$$L_{F_m}(F_m,f_m,\theta)=\frac{\partial L_{f_m}(F_m,f_m,\theta)}{\partial \theta}$$
%We have $$2\theta-2a=\frac{{f'}_m(\theta)}{f_m(\theta)}\beta$$
%$$\frac{df_m}{f_m}+\frac{2}{\beta}(a-\theta)d \theta=0$$
%$$\int\frac{df_m}{f_m}+\int\frac{2}{\beta}(a-\theta)d \theta=C$$
%$$f_m=exp(C+\frac{(\theta-a)^2}{\beta})$$
Now we consider the receiver's decision problem:
%$$\mathop{sup}\limits_{a \in A} E[-(a-\theta)^2]+\beta \int_{\theta_{i-1}}^{\theta_i} f_m(\theta)(C+\frac{(\theta-a)^2}{\beta})d\theta$$
%Use some algebra, we have:
%$$\mathop{sup}\limits_{a \in A} -\int_{\theta_{i-1}}^{\theta_i} f_m(\theta)(a-\theta)^2 d\theta+ \int_{\theta_{i-1}}^{\theta_i} f_m(\theta)(\theta-a)^2d\theta+\beta C $$
%So the receiver's decision can reduce to :
%$$\mathop{sup}\limits_{a \in A}  C$$
%Remember $$\int_{\theta_{i-1}}^{\theta_i} exp(C+\frac{(\theta-a)^2}{\beta}) d\theta=1$$
%So Receiver's problem is equivalent to 
%$$\mathop{inf} \limits_{a \in A} \int_{\theta_{i-1}}^{\theta_i} exp(\frac{(\theta-a)^2}{\beta}) d\theta$$
%The first order condition is 
%$$\int_{\theta_{i-1}}^{\theta_i} (\theta-a)exp(\frac{(\theta-a)^2}{\beta}) d\theta=0 $$
%By symmetry the solution is $a=\frac{\theta_{i-1}+\theta_i}{2}$
%\end{proof}

\subsection{Theorem 4}
\begin{proof}
$$\mathop{sup}\limits_{a_1,..., a_n} \mathop{inf}\limits_{(p_1, ...,p_n), (F_1,..., F_n)} \sum_{i=1}^{n} p_i(E[v(a,\theta)|M_i]+R(F_i|| G_i))$$

$$\mathop{sup}\limits_{a_1,..., a_n} \mathop{inf}\limits_{p_1, ...,p_n} \sum_{i=1}^{n} p_i C_i=\mathop{sup}\limits_{\Pi C_i \in \times [\hat{C}_i,C^*_i]} \mathop{inf}\limits_{p_1, ...,p_n} \sum_{i=1}^{n} p_i C_i$$
By saddle point theorem we have following equalities:
$$\mathop{sup}\limits_{\Pi C_i \in \times [\hat{C}_i,C^*_i]} \mathop{inf}\limits_{p_1, ...,p_n} \sum_{i=1}^{n} p_i C_i=\mathop{inf}\limits_{p_1, ...,p_n}\mathop{sup}\limits_{\Pi C_i \in \times [0,C^*_i]} \sum_{i=1}^{n} p_i C_i= \mathop{inf} C^*_i $$
\end{proof}

\bibliography{localbib}

\end{document}